\renewcommand*{\phi}{\varphi}
\renewcommand*{\epsilon}{\varepsilon}
\renewcommand*{\le}{\leqslant}
\renewcommand*{\ge}{\geqslant}
\DeclareMathOperator{\Tr}{Tr}
\DeclareMathOperator{\rank}{rank}
\DeclareMathOperator{\prob}{prob}
\DeclareMathOperator{\poly}{poly}
\DeclareMathOperator{\E}{\mathbb E}
\DeclareMathOperator{\Var}{Var}
\DeclareMathOperator{\erf}{erf}
\newcommand*{\ket}[1]{| #1 \rangle}
\newcommand*{\scalprod}[2]{\langle #1 | #2 \rangle}
\newcommand*{\me}[3]{\langle #1 | #2 | #3 \rangle}
\newcommand*{\dyad}[1]{| #1 \rangle \langle #1 |}
\newcommand*{\HG}{\text{HG}}
\newcommand*{\F}{\mathbb F}
\newtheorem{theorem}{Theorem}
\newcolumntype{2}{D{.}{.}{2}}
\newcolumntype{3}{D{.}{.}{3}}
\newcolumntype{4}{D{.}{.}{4}}
\newcolumntype{5}{D{.}{.}{5}}
\newcolumntype{6}{D{.}{.}{6}}
\begin{document}

\title{Experimental Estimation of Quantum State Properties from Classical Shadows}

\author{G.\,I.\,Struchalin}
	\email{struchalin.gleb@physics.msu.ru}
\author{Ya.\,A.\,Zagorovskii}
\author{E.\,V.\,Kovlakov}
\author{S.\,S.\,Straupe}
\author{S.\,P.\,Kulik}
\affiliation{Quantum Technology Centre, and Faculty of Physics, M.\,V.\,Lomonosov Moscow State University, 119991, Moscow, Russia}

\date{\today}

\begin{abstract}
Full quantum tomography of high-dimensional quantum systems is experimentally infeasible due to the exponential scaling of the number of required measurements on the number of qubits in the system. However, several ideas were proposed recently for predicting the limited number of features for these states, or estimating the expectation values of operators, without the need for full state reconstruction. These ideas go under the general name of shadow tomography. Here we provide an experimental demonstration of property estimation based on classical shadows proposed in [H.-Y.\,Huang \textit{et al.}, Nat. Phys. \href{https://doi.org/10.1038/s41567-020-0932-7}{10.1038/s41567-020-0932-7} (2020)] and study its performance in the quantum optical experiment with high-dimensional spatial states of photons. We show on experimental data how this procedure outperforms conventional state reconstruction in fidelity estimation from a limited number of measurements.
\end{abstract}

\pacs{03.65.Wj, 03.67.-a, 02.50.Ng, 42.50.Dv}

\maketitle

\section{Introduction\label{sec:Introduction}}
A full description of a quantum system state is provided by its density matrix~$\rho$, and conventional quantum tomography aims to provide an estimate~$\hat\rho$ of a density matrix for an unknown quantum state given the measurement data~\cite{Paris_Book2004}. The measurements have to be tomographically complete in the sense that they should allow unambiguous determination of all density matrix elements. Simple parameter counting shows that for a general mixed state of a system in a $D$-dimensional Hilbert space, the required number of measurements is at least $D^2$~\cite{Banaszek_PRA99}. This number may be reduced to $O(RD\log^2 D)$ if some prior information about the state rank~$R$ is known by using the techniques of compressed sensing~\cite{Eisert_PRL10}, or otherwise one has to stick to incomplete state tomography~\cite{Teo_QMQM2013}. For pure states, protocols requiring as few as~$5D$ measurements are known~\cite{Delgado_PRL15}. Anyway, for an $n$-qubit system, the number of measurements scales exponentially, since $D=2^n$, which is known as the curse of dimensionality. One of the ways around this problem is to assume some model for the quantum state, allowing for efficient representation, such as a matrix-product state model~\cite{Cramer_NatureComm2010,Lanyon_NaturePhys2017} or a neural-network-based model~\cite{Torlai_NaturePhys2018,Carrasquilla_NatureMI2019,Tiunov_Optica2020}. In general, however, there may be no a priori reason to assign such a model to an unknown state.

On the other hand, the exponential amount of information contained in a full density matrix may be redundant. Typically a researcher is interested in a restricted number of state properties, such as fidelity to the given state which is intended to prepare, or a mean value of some observable. This fact led to a different approach called \emph{shadow tomography} pioneered in the work~\cite{Aaronson_Proc2018}. It promises accurate estimation of exponentially many linear functions of~$\rho$ using only a polynomial number of state copies. However, the original method from Ref.~\cite{Aaronson_Proc2018} is very demanding for hardware implementation as it involves measurements that act collectively on all copies. So despite significant experimental progress in approximate quantum learning~\cite{Rocchetto_ScienceAdvances2019}, direct realization of the original shadow tomography is beyond the current technology.

Fortunately, the authors of Ref.~\cite{Kueng_NatPhys2020} proposed another procedure that requires only separable measurements on each copy yet being powerful in estimating an exponentially large number of state properties. Here we report an experimental realization of this procedure demonstrating estimation of mean values of operators and fidelity estimation from \emph{classical shadows} of quantum states introduced in~\cite{Kueng_NatPhys2020}. We experimentally access Hilbert spaces of dimensionality up to 32 and clearly demonstrate that the approach is applicable in the region of incomplete measurement sets, where traditional tomography fails completely.

%The full description of a quantum system state is provided by its density matrix~$\rho$, which scales exponentially with the number of qubits~$n$. Conventional quantum tomography aims to recover~$\rho$, and so it is subject to the curse of dimensionality---the number of estimated parameters becomes incredibly large when~$n$ increases. On the other hand, a researcher is usually interested in a restricted number of state properties, such as fidelity to the state, intended to prepare, or Hamiltonian mean value. This fact led to a different approach called \emph{shadow tomography} pioneered in the work~\cite{Aaronson_Proc2018}. It promises accurate estimation of exponentially many linear functions of~$\rho$ using only a polynomial number of state copies. However, the original method from Ref.~\cite{Aaronson_Proc2018} is very demanding for hardware implementation as it involves measurements that act collectively on all copies. Thus, it is beyond current technology. Fortunately, the authors of Ref.~\cite{Kueng_NatPhys2020} proposed another procedure that requires only separable measurements on each copy yet being powerful in estimating an exponentially large number of state properties.

\section{Method\label{sec:Method}}
Shadow tomography is a tool for the effective prediction of quantum state properties. Let us note, that understanding the term in this broader sense we will refer to the protocol of Ref.~\cite{Kueng_NatPhys2020} as shadow tomography as well. While it is capable of recovering both linear and higher-order polynomial target functions in matrix elements of~$\rho$, in the present work, we will focus solely on linear ones. We will explicitly describe the algorithm we used in application to our experiment. The reader is referred to the original paper~\cite{Kueng_NatPhys2020} for details on the general framework, nonlinear feature prediction, and proofs of performance guarantees.

The goal of the algorithm is to predict the expectation values $\{ o_i\}$ for a set of~$M$ observables $\{O_i\}$:
\begin{equation}
o_i(\rho) = \Tr O_i \rho, \quad 1 \le i \le M, \label{eq:o}
\end{equation}
where~$\rho$ is an $n$-qubit \emph{true state}. Obviously, $o_i(\rho)$ are linear in matrix elements of~$\rho$.

%Many interesting quantities have form~\eqref{eq:o}, e.\,g. fidelity to the target pure state~$\ket{\psi}$ ($k = 1, O = \dyad{\psi}$), or second-order R\'enyi entanglement entropy, $-\log \Tr \rho_A^2 = -\log \Tr (O \rho \otimes \rho)$, where $A$ is a subsystem of $n$-qubit register ($k = 2$, $O$ is some kind of swap operator). In the present work we focus on \emph{linear} functions~$o_i(\rho)$, $k = 1$.

In the data-gathering stage, $\rho$ is transformed by a unitary operator~$U$, $\rho \to U \rho U^\dagger$, and then each qubit is measured in a computational basis. This procedure is repeated many times for different~$U \in \mathcal U$, chosen randomly from some matrix ensemble~$\mathcal U$. The choice of~$\mathcal U$ affects the tomography performance and determines a class of observables~$O_i$ that can be effectively estimated. The authors of~\cite{Kueng_NatPhys2020} mainly consider two ensembles: stabilizer circuits, i.\,e., $U$ belonging to the $n$-qubit Clifford group~\cite{Gottesman_Thesis}, and Pauli measurements, where each~$U$ is a tensor product of single-qubit operations. We have selected the first option as a more extensive alternative, yet our experimental setup can carry out any measurement.

The random unitary transformation, $\rho \to U \rho U^\dagger$, followed by a measurement in a computational basis~$\{\ket{b_i}\}$ is equivalent to the projection onto a random vector $\ket{\psi_i} = U^\dagger \ket{b_i} \in \mathcal S$. Since~$U$ is a Clifford scheme, then by definition $\ket{\psi}$ is a random \emph{stabilizer} state and~$\mathcal S$ is the set of all $n$-qubit stabilizer states. Later, such measurements will be referred to as Clifford or stabilizer measurements. We resort to vectors, rather than Clifford gates, because our experiment lacks a natural decomposition of unitary transformations into a gate sequence. The algorithm for uniform sampling of random stabilizer states $\ket{\psi} \in \mathcal S$, is presented in the Appendix~\ref{sec:StabilizerGeneration}.

When the measurement results are obtained, the \emph{classical shadow}~$\hat \rho$ of the $n$-qubit true state~$\rho$ is calculated:
\begin{equation}
\hat \rho = (2^n + 1) \sum_{i=1}^{P} f_i \dyad{\psi_i} - \mathbb I, \label{eq:Shadow}
\end{equation}
where~$P$ is the number of projections and $f_i$ is the observed frequency for the outcome, corresponding to~$\ket{\psi_i}$, $\sum_{i=1}^P f_i = 1$. The expression~\eqref{eq:Shadow} is nothing more than an explicit form of a linear inverse (least squares) estimator for any spherical 2-design POVM~\cite{Kueng_JPA2020}. Our choice, i.\,e., stabilizer states, forms a 3-design~\cite{Webb_QIC2016} and the expression is also applicable.

We emphasize that initially, in the work~\cite{Kueng_NatPhys2020}, each projection is assumed to be performed for a single copy of~$\rho$. Therefore, the number of projections~$P$ coincides with the number~$N$ of measured copies, $P = N$, ($f_i = 1/N$). On the other hand, in our quantum optical experiment, several photons can be detected for the same $\ket{\psi_i}$ during the acquisition time, so $P < N$. Moreover, we worked in the regime of overexposure, for which $P \ll N$ (typically, $N/P \sim 10^4$--$10^5$ depending on the system dimensionality). This setting is common in compressive sensing experiments, where shot noise in the outcome probability estimation should be diminished~\cite{Leach_SciRep2014,Eisert_QSciTech2017,Teo_PRA2020}. Preliminary tomography simulations showed that feature prediction accuracy was limited by finite~$P$ even though $N = \infty$ (observed frequency~$f_i$ was substituted with exact outcome probability). In this sense, $P$ is more important than~$N$. When $P < N$, at least, $P$ copies are measured with dissimilar projectors, so theorems presented in Ref.~\cite{Kueng_NatPhys2020} stay valid if~$N$ is replaced by~$P$. However, theorem statements can become pessimistic, and proofs may require further justification for the case $P < N$.

Once a classical shadow~\eqref{eq:Shadow} is obtained, an estimator~$\hat o_i$ of $o_i$ is simply
\begin{equation}
\hat o_i = \Tr O_i \hat \rho, \quad 1 \le i \le M. \label{eq:ohat}
\end{equation}
Here comes another discrepancy with the original algorithm: the authors of Ref.~\cite{Kueng_NatPhys2020} propose to use \emph{median-of-means} estimator. However, we omit the median evaluation and use a simple mean estimator throughout the work, because no valuable difference was found between the two approaches~\footnote{The only exception are the results presented in Fig.~\ref{fig:Medians}.}.

When $P = N$, the shadow tomography protocol has the following sampling complexity~\cite{Kueng_NatPhys2020}:
\begin{theorem}
	$N$ stabilizer measurements suffice to predict $M$ expectations~$o_i = \Tr O_i \rho, 1 \le i \le M,$ within an additive error~$\epsilon$ given that
	\begin{equation}
	N \ge \mathcal O \left( \frac{\log M}{\epsilon^2} \max_i \Tr O_i^2 \right). \label{eq:SampleComplexity}
	\end{equation}
\end{theorem}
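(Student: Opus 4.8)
The plan is to establish~\eqref{eq:SampleComplexity} in three steps: (i) show that the single-shot estimator built from a classical shadow is unbiased, (ii) control its variance by $\Tr O_i^2$, and (iii) convert the variance bound into the stated sample complexity via a median-of-means argument combined with a union bound over the $M$ observables.

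First I would verify unbiasedness. Averaging the random measurement $\rho \to U \rho U^\dagger$ followed by a computational-basis readout defines a linear map $\mathcal M(\rho) = \E_{U} \sum_b \bra{b} U \rho U^\dagger \ket{b}\, U^\dagger \dyad{b} U$. Because the stabilizer states form a design (indeed a $3$-design, hence in particular a $2$-design, as noted after Eq.~\eqref{eq:Shadow}), this channel is depolarizing, $\mathcal M(\rho) = (\rho + \Tr(\rho)\mathbb{I})/(2^n+1)$, and inverting it reproduces precisely the estimator $\hat\rho = (2^n+1)\dyad{\psi} - \mathbb{I}$ of Eq.~\eqref{eq:Shadow}. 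Linearity of the trace then yields $\E[\hat o_i] = \Tr O_i\, \E[\hat\rho] = \Tr O_i \rho = o_i$, so each $\hat o_i$ is unbiased.

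Second, I would bound the single-shot variance. Writing $\hat o_i = (2^n+1)\me{\psi}{O_i}{\psi} - \Tr O_i$, the second moment $\E[\hat o_i^2]$ is an average of a product of three stabilizer-state projectors: one factor arising from the outcome probability $\me{\psi}{\rho}{\psi}$ and two from $O_i$. This is exactly where the $3$-design property is indispensable, as it permits evaluating the third moment of the ensemble in closed form and gives a variance bound of the form $\Var[\hat o_i] \le 3\Tr\bigl((O_i - 2^{-n}\Tr(O_i)\mathbb{I})^2\bigr) \le 3\Tr O_i^2$, uniform in $\rho$. I expect this Clifford third-moment average to be the main obstacle, since it rests on the Weingarten-type combinatorics of the stabilizer ensemble rather than on any elementary estimate.

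Finally, I would boost the confidence. Split the $N$ shots into $K$ batches and let $\hat o_i^{(k)}$ be the empirical mean of batch $k$; each has variance at most $3\Tr(O_i^2)\,K/N$, so by Chebyshev $|\hat o_i^{(k)} - o_i| \le \epsilon$ with probability at least $3/4$ once $N/K \ge \mathcal O(\Tr(O_i^2)/\epsilon^2)$. Taking the median over the $K$ batches fails only if a majority of batches are bad, an event whose probability decays as $e^{-\Omega(K)}$ by a binomial tail bound. Choosing $K = \mathcal O(\log M)$ makes this at most $\delta/M$ per observable, and a union bound over $i = 1, \dots, M$ then guarantees that all $M$ estimates lie within $\epsilon$ simultaneously with constant success probability. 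Multiplying the per-batch requirement by $K$ and maximizing over $i$ collects the factors into $N \ge \mathcal O\bigl(\epsilon^{-2}\log M\, \max_i \Tr O_i^2\bigr)$, which is the claimed bound.
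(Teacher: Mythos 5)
You should first be aware that this paper never proves the theorem: it is quoted (for the case $P = N$) from Ref.~\cite{Kueng_NatPhys2020}, and the text explicitly defers all proofs of performance guarantees to that source. So the comparison must be made against the cited proof, and your proposal reconstructs exactly its route: (i) unbiasedness, because for any 2-design ensemble the measurement channel is $\mathcal M(\rho) = \bigl(\rho + \Tr(\rho)\,\mathbb I\bigr)/(2^n+1)$, whose inverse applied to the outcome projector gives precisely the shadow~\eqref{eq:Shadow}; (ii) the single-shot variance bound $\Var[\hat o_i] \le 3 \Tr O_i^2$ via the third moment of the stabilizer ensemble, which is where the 3-design property noted after Eq.~\eqref{eq:Shadow} is essential, a 2-design being insufficient; (iii) median-of-means with $K = \mathcal O(\log M)$ batches, a Chernoff bound for the median, and a union bound over the $M$ observables. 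Steps (i) and (iii) are complete and correct as written, including the constants ($3/4$ per-batch success from Chebyshev, $e^{-\Omega(K)}$ failure of the median).

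The one genuine incompleteness --- which you flag yourself --- is that step (ii) is asserted rather than derived. The bound $\Var[\hat o_i] \le 3\Tr\bigl((O_i - 2^{-n}\Tr(O_i)\,\mathbb I)^2\bigr)$ requires evaluating $\E_\psi\bigl[\me{\psi}{\rho}{\psi}\,\me{\psi}{O_i}{\psi}^2\bigr]$ over random stabilizer states, i.e., expanding the third moment $\E_\psi\bigl[\dyad{\psi}^{\otimes 3}\bigr]$ in permutation operators on the triple tensor space and collecting terms; this is the technical core of the cited proof (its ``shadow norm'' bound for Clifford measurements), and everything downstream of it is routine concentration of measure. Without it you have a correct skeleton but not a proof: Chebyshev in step (iii) has nothing to feed on, and the dimension-independence of the final bound --- the entire point of the theorem --- rests on that calculation. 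Note also that the paper's Appendix~\ref{sec:MedianOfMeans} contains a related Chebyshev versus median-of-means discussion, but only as a heuristic justification for replacing the median by a plain mean in the experimental regime $P \ll N$; it is not, and does not claim to be, a proof of the theorem you were asked to establish.
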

The number of copies~$N$ depends on target operators~$O_i$ rather implicitly via $\Tr O_i^2$. In our experiments we used rank-1 projectors, therefore, $\Tr O_i^2 = 1$, and this factor vanishes from~\eqref{eq:SampleComplexity}.

\section{Experiment\label{sec:Experiment}}

We use spatial degrees of freedom of photons to produce high-dimensional quantum states. The corresponding continuous Hilbert space is typically discretized using the basis of transverse modes. We have chosen Hermite-Gaussian (HG) modes $\HG_{nm}(x,y)$, which are the solutions of the Helmholtz equation in Cartesian coordinates $(x,y)$ and form a complete orthonormal basis. The \emph{mode order}~$k$ is defined as a sum of mode indices: $k = n+m$. There exist $(k+1)(k+2)/2$ HG modes from zero to $k$th order inclusive. We bound the beam order to prepare a $D$-dimensional system, i.\,e., the order~$k$ is limited by $k_\text{max}$, $k \le k_\text{max}$, where $k_\text{max}$ is the minimal integer fulfilling the inequality $(k_\text{max}+1)(k_\text{max}+2)/2 \ge D$. We test shadow tomography for dimensions $D = 2, 4, 8, 16,$ and $32$, which corresponds to one to five qubits.

\begin{figure}
	\centering
	\includegraphics[width=\linewidth]{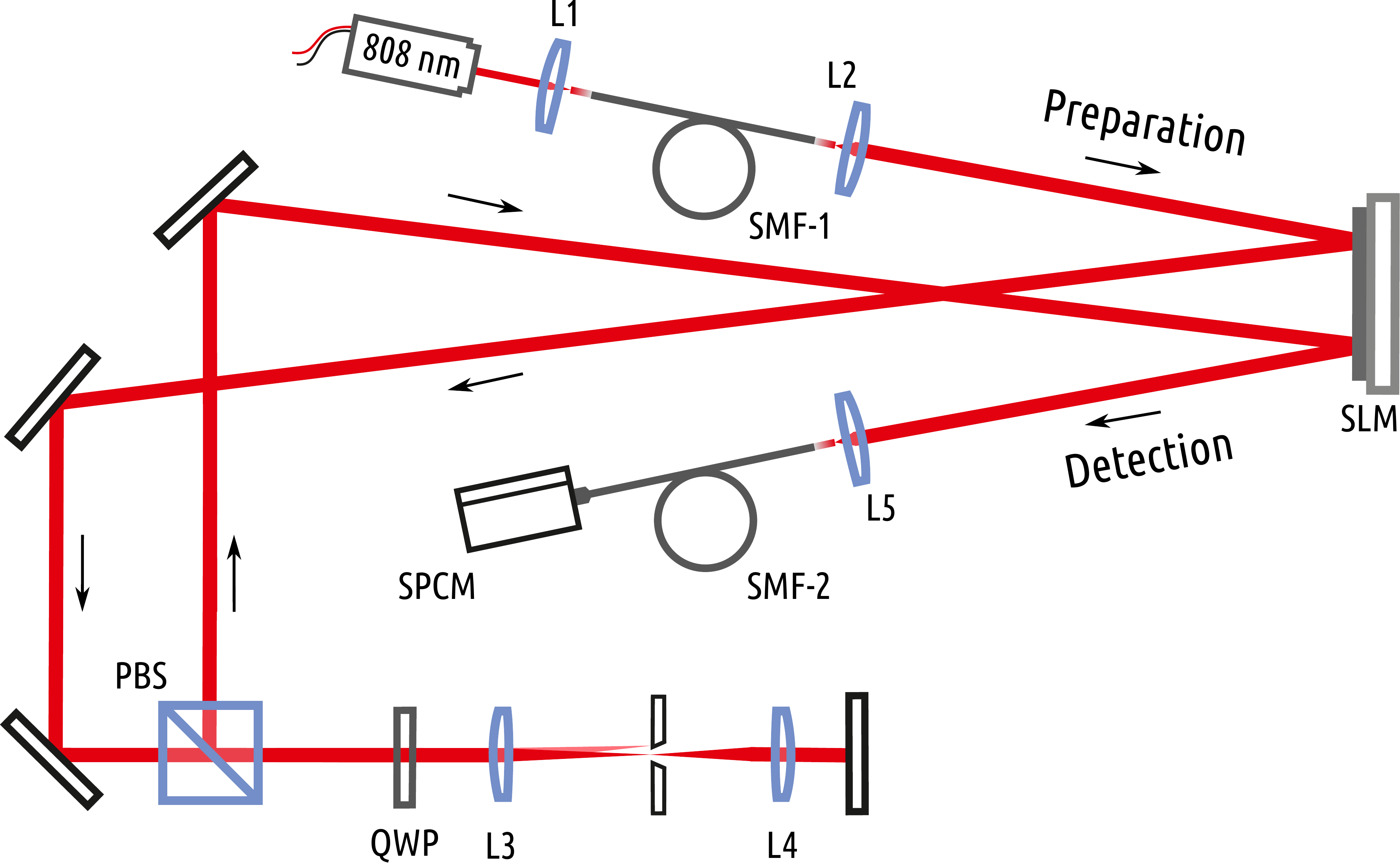}
	\caption{Experimental setup. A spatial light modulator is used for preparation and projective measurements of arbitrary spatial states of photons in a basis of Hermite-Gaussian modes of dimensionality up to 32 (see text for details). }
	\label{fig:Setup}
\end{figure}

In our setup (Fig.~\ref{fig:Setup}) an attenuated light from an 808-nm diode laser is spatially filtered by a single-mode fiber (SMF-1) and collimated by an aspheric lens L2. The top half of a spatial light modulator (SLM, Holoeye Pluto) serves to prepare the desired state of the photon, and the bottom half followed by focusing into a single-mode fiber (SMF-2) and single photon detection implements projective measurements~\cite{Bent:2015experimental, Macarone:2019experimental}. Lenses L3 and L4 have equal focal lengths $F = 100$~mm and are mounted 200~mm apart. Since holograms displayed on the SLM use a blazed grating for amplitude modulation~\cite{Bolduc:2013aa}, the pinhole in the focal plane is used for state selection in the first diffraction order. After a double pass through a telescope and a quarter-wave plate (QWP), the beam is reflected by a polarizing beam splitter (PBS) and directed back to the SLM without any additional alterations.

Note that the detected state differs from the prepared one due to the Gouy phase incursion during beam propagation from one half of the SLM to another. The Gouy phase~$\phi_G$ depends solely on the geometric parameters of the experimental setup, e.\,g., the beam Rayleigh range and traveling distance. It causes the following transformation of basis states: $\ket{\HG_{nm}} \to e^{i (n+m+1) \phi_G}\ket{\HG_{nm}}$. We use  the Gouy phase as a fitting parameter to determine the ``true'' state.

\section{Results\label{sec:Results}}

\subsection{Correlation analysis}

Expectations~$o_i$ can be estimated by shadow tomography via~\eqref{eq:ohat}. On the other hand, the expression~\eqref{eq:o} has the form similar to the Born's rule, so quantities~$o_i$ can be measured directly. It provides a way of independent experimental verification of shadow tomography predictions. We will denote the estimates given by shadow tomography as $\hat o_i^\text{est.}$, and the directly measured expectations as $\hat o_i^\text{meas.}$. These values are both subject to experimental imperfections and shot noise due to finite statistics~$N$. However, the latter factor is negligible, since in all experiments the total exposure corresponding to the value $o_i = 1$ was approximately $3 \times 10^5$ photons with proportional scaling for other values of~$o_i$.

At first, we performed $10^4$ stabilizer measurements to obtain the classical shadow~$\hat \rho$. Then, $5000$ projectors~$O_i = \dyad{\phi_i}$ onto random Haar-distributed vectors~$\ket{\phi_i}$ were measured, resulting in an array of~$\hat o_i^\text{meas.}$. For the same operators~$O_i$, we calculated the predictions $\hat o_i^\text{est.}$ using the classical shadow and plot them against $\hat o_i^\text{meas.}$. For each investigated dimension~$D = 2^n, n = 1, \dots, 5$, we probed five different Haar-distributed random pure true states to ensure that the procedure is a state agnostic one. We observed high Pearson correlation coefficient between the two quantities in all scenarios, signaling about the shadow tomography consistency (see Table~\ref{tab:DrF}).

\begin{table}[h]
	\caption{\label{tab:DrF}Pearson correlation coefficient~$r$ and compensated preparation fidelity~$F$, averaged over five random states, for different system dimensions~$D$.}
	\begin{ruledtabular}
		\begin{tabular}{ccc}
			$D$ & $r$ & $F$ \\
			\hline
			2 & $0.989 \pm 0.002$ & $ 0.981 \pm 0.013$\rule{0pt}{11pt}\\
			4 & $0.983 \pm 0.001$ & $0.974 \pm 0.011$ \\
			8 & $0.976 \pm 0.002$ & $0.899 \pm 0.009$ \\
			16 & $0.953 \pm 0.003$ & $0.920 \pm 0.020$ \\
			32 & $0.875 \pm 0.006$ & $0.807 \pm 0.031$ \\
		\end{tabular}
	\end{ruledtabular}
\end{table}

\begin{figure}
	\centering
	\includegraphics[width=\linewidth]{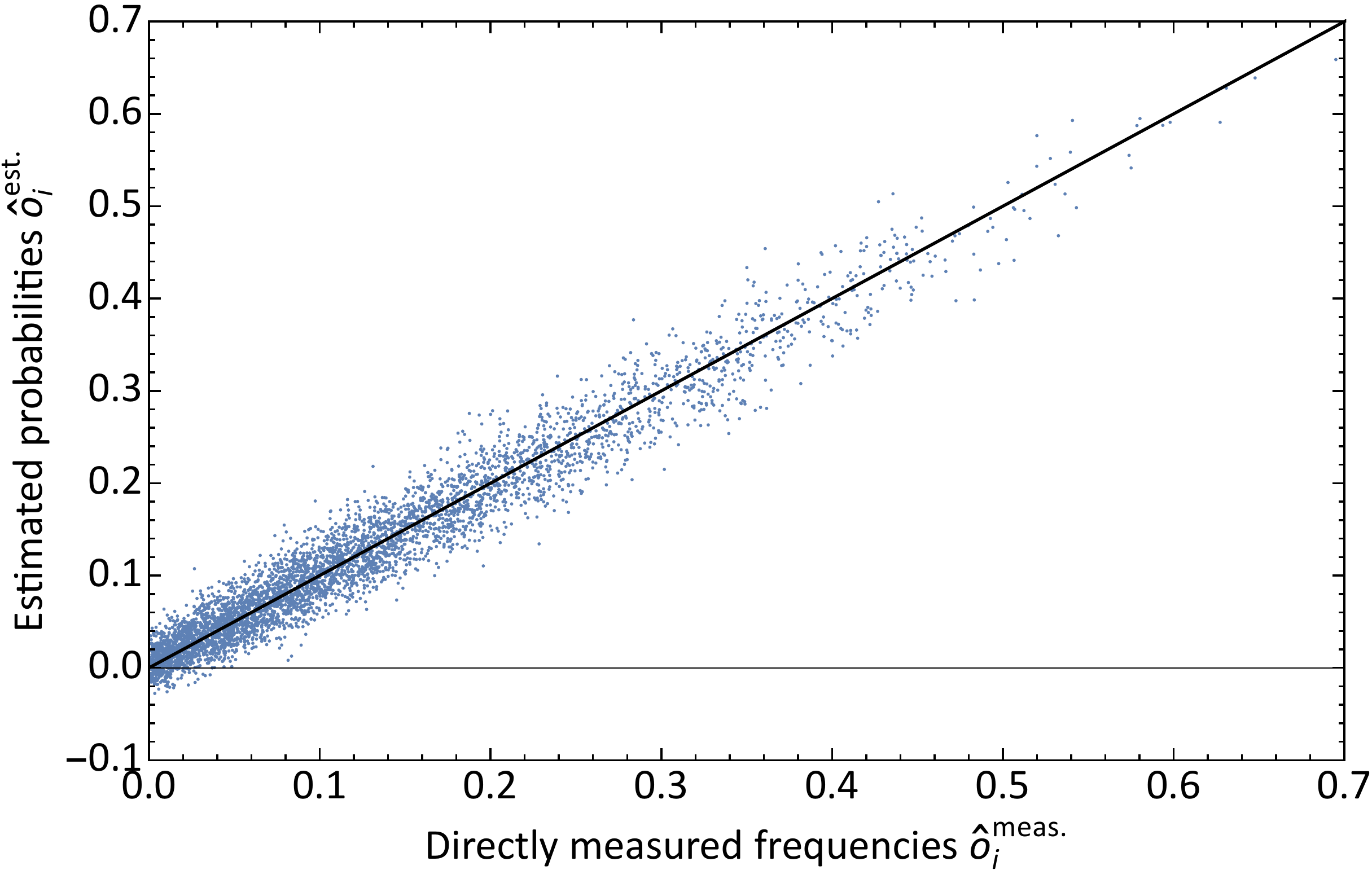}
	\caption{A typical correlation plot (system dimension $D = 8$). Prediction of operator mean values $\hat o_i^\text{est.}$ using shadow tomography versus directly measured quantities $\hat o_i^\text{meas.}$ is depicted. The solid black line corresponds to the equality $\hat o_i^\text{est.} = \hat o_i^\text{meas.}$.}
	\label{fig:ValidationD8Haar}
\end{figure}

A typical correlation plot is depicted in Fig.~\ref{fig:ValidationD8Haar} for system dimension $D = 8$. The solid black line shows perfect matching---the dependence $\hat o_i^\text{est.} = \hat o_i^\text{meas.}$. As one can see all points tend to concentrate near this line (Pearson correlation coefficient is $r = 0.9758$). Note the existence of a small ``nonphysical'' region, where $\hat o_i^\text{est.} < 0$. It appears because the classical shadow~$\hat \rho$ is not forced to be positive semidefinite as in conventional tomography, such as maximum likelihood estimation. And, indeed,~$\hat \rho$ contains negative eigenvalues due to experimental imperfections. Apparently, values of $\hat o_i^\text{meas.}$ are shifted towards zero. This is a mere artifact of our choice for~$O_i$. The probability density function (PDF) for $\hat o_i^\text{meas.}$ coincides with the PDF~$p(x)$ for a squared dot product, $x \equiv |\scalprod{\psi}{\phi}|^2$, between a fixed vector~$\ket{\psi}$, reflecting the true state, and a random Haar-distributed vector $\ket{\phi}$, corresponding to a projector~$O_i$: $p(x) = (D-1)(1-x)^{D-2}$~\cite{Zyczkowski_PRA05}. As the dimensionality~$D$ increases, the mean value $\langle x \rangle = 1/D$ decreases.

\subsection{Effect of median of means estimator}

It was said earlier that the authors of Ref.~\cite{Kueng_NatPhys2020} suggest to use the median of means estimator~\cite{Jerrum_TCS1968}, which proceeds as follows:
\begin{enumerate}
	\item A sequence of~$P$ measurement results is divided into~$K$ batches of length $\lfloor P/K \rfloor$.
	\item An individual shadow~$\hat \rho_k$ is calculated for each batch with index $k = 1, \dots, K$, analogously to~\eqref{eq:Shadow}.
	\item A final assessment $\hat o_i$ is the median:
	\begin{equation}
	\hat o_i = \mathrm{median} (\Tr O_i \rho_1, \dots, \Tr O_i \rho_K).
	\end{equation}
\end{enumerate}
The median of means estimator is robust against outliers in the measured data. The number of batches~$K$ depends on the number of target operators~$M$ and the confidence probability $1-\delta$: $K = 2 \log(2M/\delta)$. For example, if the failure level is chosen to be $\delta = 0.01$ and $M = 5000$, the number of batches is $K \approx 28$.

\begin{figure}[b]
	\centering
	\includegraphics[width=\linewidth]{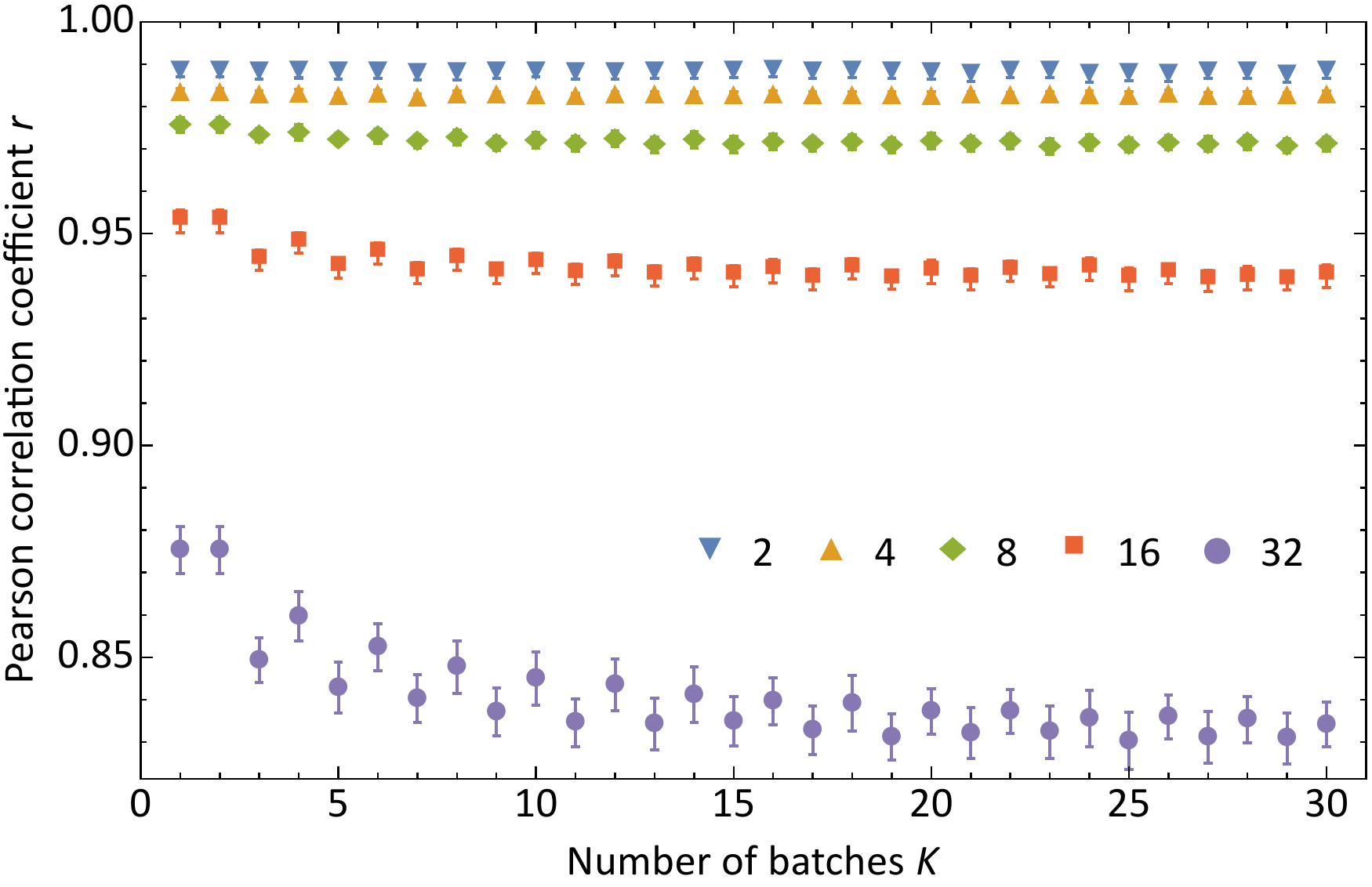}
	\caption{Dependence of the Pearson correlation coefficient~$r$ between $\hat o_i^\text{est.}$ and $\hat o_i^\text{meas.}$ on the number of batches~$K$ in the median of means evaluation for different system dimensions~$D$ (see legends). Each data point is averaged over five true states. Error bars correspond to one standard deviation of the mean.}
	\label{fig:Medians}
\end{figure}

In order to investigate how the number of batches~$K$ influences the overall tomography performance, we found the median-of-means predictions $\hat o_i^\text{est.}$ for various~$K$ and calculated the Pearson correlation coefficient~$r$ between $\hat o_i^\text{est.}$ and $\hat o_i^\text{meas.}$. The obtained dependencies $r(K)$ are presented in Fig.~\ref{fig:Medians} for different system dimensions~$D$. Each curve is averaged over five tomography runs. The case $K = 1$ corresponds to the ordinary mean estimator, as was used before. The reader can see that the dependencies $r(K)$ are almost flat, and correlation even becomes slightly lower with the increase of~$K$. This implies that in application to our experiment the effect of the median of means estimator is negligible compared to the mean alone.

We connect the independence of accuracy on~$K$ with two facts. Firstly, the statistics per measurement in our experiments is huge, and the outliers hardly occur. See Appendix~\ref{sec:MedianOfMeans} for more detailed reasoning. Secondly, systematic, deterministic errors in measurement projectors dominate over the statistical noise, and medians cannot smooth away this source of imperfections.

\subsection{Fidelity estimation}

One of the important usecases for shadow tomography is the estimation of fidelity to some given pure state~$\ket{\psi}$. In this case, the target operator~$O$ is simply a projector onto this state: $O = \dyad{\psi}$. In particular, one can find fidelity of the state preparation. However in our experiment, the prepared state~$\ket{\psi_\text{prep.}}$ and the detected one~$\ket{\psi_\text{det.}}$ differ significantly due to the Gouy phase incursion during the beam propagation, and we have to perform the corresponding correction (see Appendix~\ref{sec:GouyPhase} for details). The obtained fidelities~$F$ are listed in Table~\ref{tab:DrF}.

The results presented above were obtained using overcomplete measurement sets since we used $P = 10^4$ projectors to construct the classical shadow~$\hat \rho$. This number is far greater than the size of a minimal complete set, which has $D^2 - 1$ POVM elements, even for $D = 32$. However, the main distinguishing feature of shadow tomography is its ability to predict expectation values using much less then a tomographically complete set of measurements. Hence, we also studied the performance of shadow tomography for the intermediate values of~$P$, including the incomplete scenario, where $P < D^2 - 1$.

\begin{figure*}
	\centering
	\subfloat[Shadow tomography.]
	{
		\includegraphics[width=0.45\linewidth]{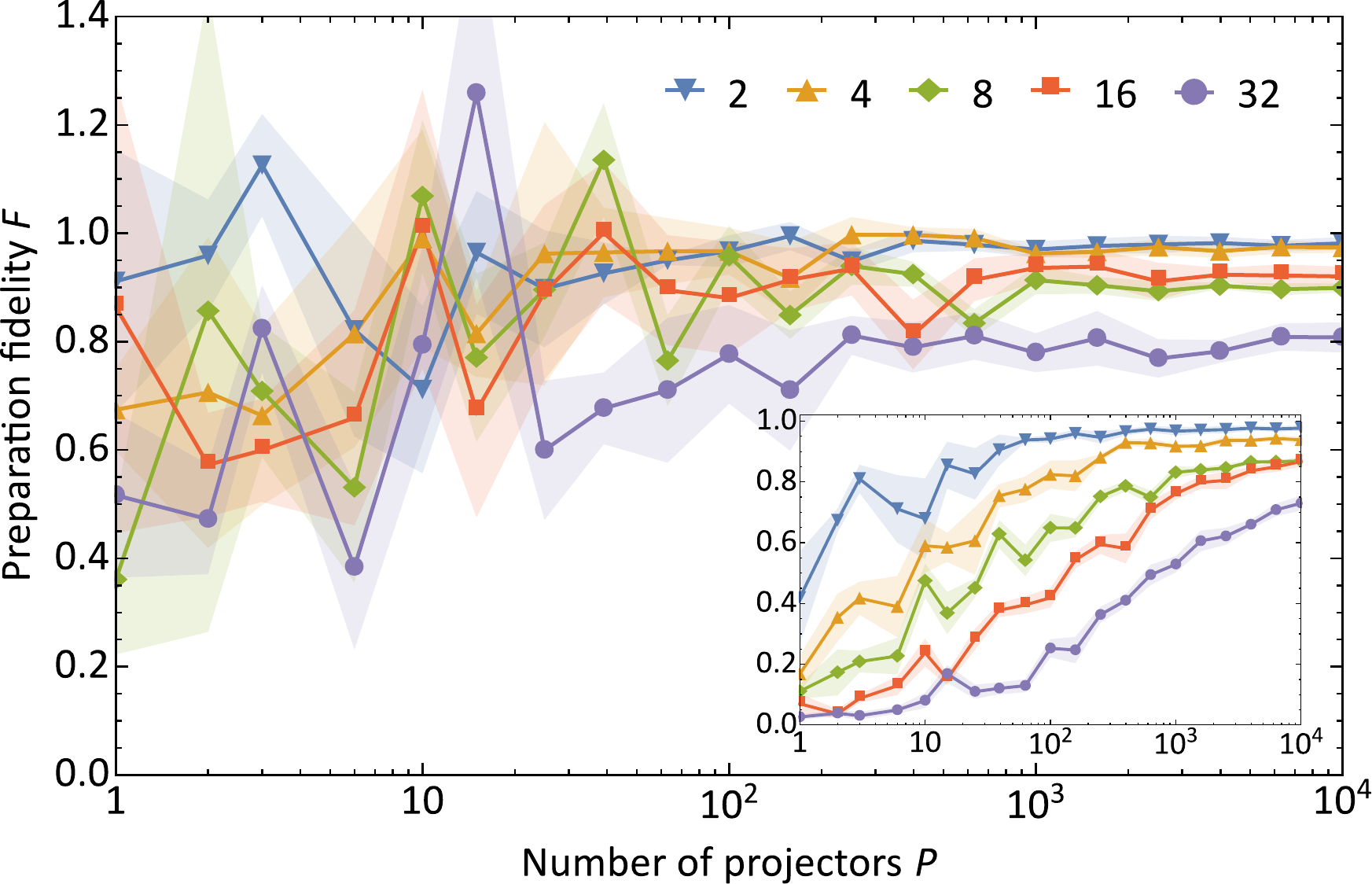}
		\label{fig:FidelityST}
	}
	\quad
	\subfloat[Maximum likelihood estimation.]
	{
		\includegraphics[width=0.45\linewidth]{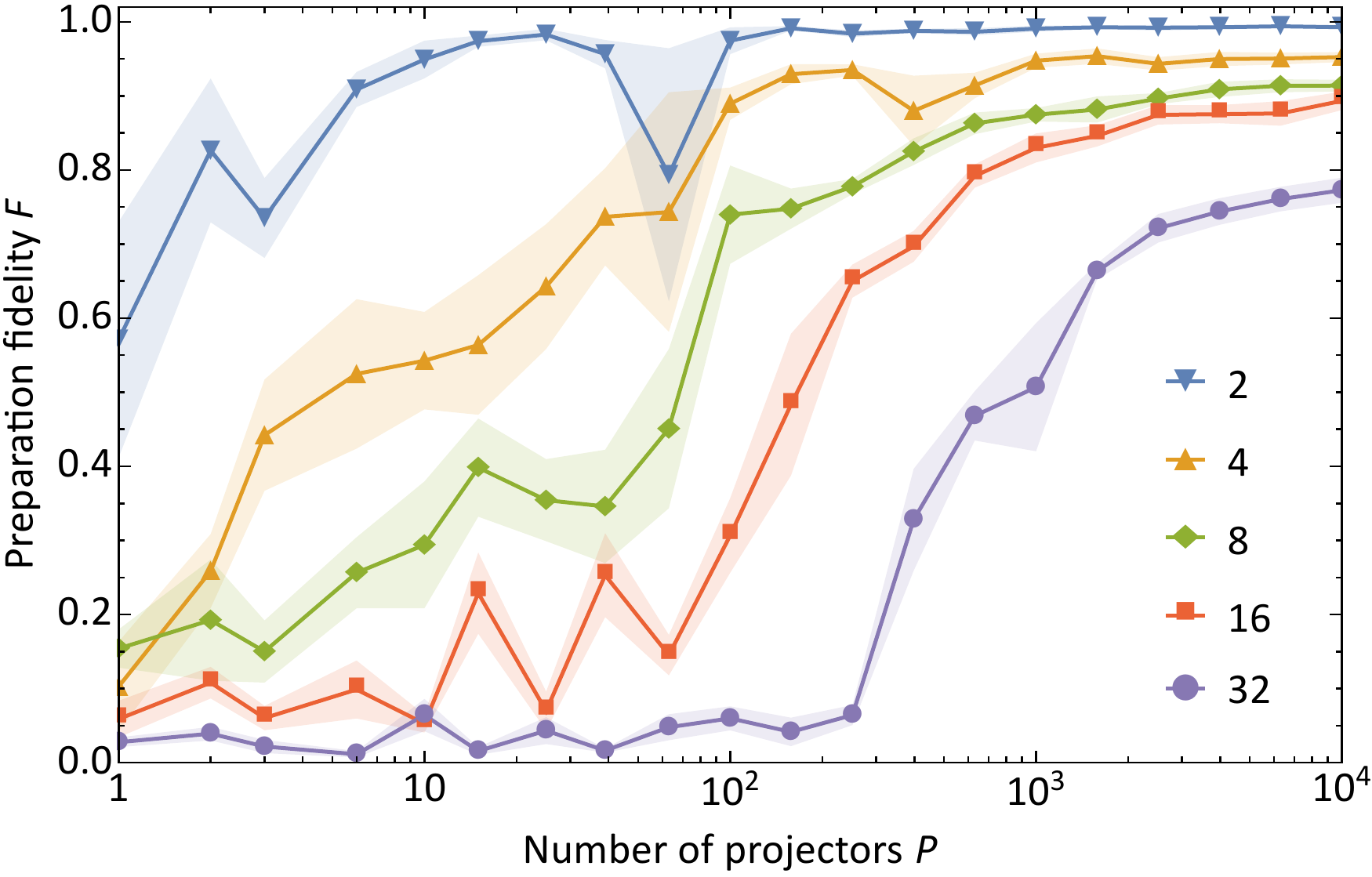}
		\label{fig:FidelityMLE}
	}
	\caption{Compensated preparation fidelity~$F$ on the number of stabilizer measurements~$P$ for different system dimensions~$D$ (see legends) obtained using (a) shadow tomography and (b) maximum likelihood estimation. Each curve is averaged over five true states. Shaded area corresponds to one standard deviation of the mean. Inset of Fig.~\ref{fig:FidelityST} shows the same dependencies, but the classical shadow~$\hat \rho$ is projected onto the set of physical density matrices.}
	\label{fig:Fidelity}
\end{figure*}

Fig.~\ref{fig:FidelityST} shows averaged dependencies of the preparation fidelity~$F$, estimated using shadow tomography, on the number of stabilizer measurements~$P$ for various system dimensions~$D$. Fidelity is calculated with respect to the compensated prepared state, where the compensatory Gouy phase is found using the full data sequence (i.\,e., for $P = 10^4$). The averaging is done over five different states for each dimension.

In the beginning, for low~$P$, the volatility of curves is vast, and fidelity~$F$ can even lie outside the physical region $0 \le F \le 1$ due to the negative definiteness of a shadow matrix~$\hat \rho$. As~$P$ increases, fidelities start to stabilize near their final values. Nevertheless, the fidelity estimators are unbiased for any number of projectors~$P$ because shadow tomography is based on the linear inversion that is unbiased. And indeed, as one can see from Fig.~\ref{fig:FidelityST}, the error bars cover the final values of fidelity reasonably well for any~$P$, which experimentally confirms the unbiasedness property.

It is interesting to see how the above fidelity estimates change if the shadow matrix~$\hat \rho$ [see Eq.~\eqref{eq:Shadow}] is forced to be positive semidefinite. To achieve this, we project the eigenvalues~$\lambda_i$ of~$\hat \rho$ onto a canonical simplex $\Delta = \{(\lambda_1, \dots, \lambda_D) \mid \lambda_i \ge 0 \wedge \sum_{i=1}^D \lambda_i = 1\}$, using the recipe from Ref.~\cite{Chen_Arxiv2011}, while leaving the eigenvectors untouched. The obtained results are shown in the inset of Fig.~\ref{fig:FidelityST}. Now the estimators are biased: for incomplete measurement sets, $P \lesssim D^2$, fidelity is underestimated and significantly shifted towards zero. When~$P$ becomes equal in the order of magnitude to $D^2$, the assessments attain their final values. Note the apparent dependency on the system dimension~$D$, which is not the case for ordinary shadow tomography.

The bias of the estimator leads to poor accuracy when a measurement set is incomplete. For example, consider the point with $D = 32$ and $P = 251$ in Fig.~\ref{fig:FidelityST}. Shadow tomography has already converged since fidelity is $F = 0.81 \pm 0.04$, which coincides with the final value for $P = 10^4$ within the error bars, but after the projection of eigenvalues onto the positive simplex fidelity drops to $F = 0.36 \pm 0.03$. Unfortunately, the bias is unavoidable for any procedure that always yields positive density matrices~\cite{Schwemmer_PRL2015}. A maximum likelihood estimate (MLE) is neither an exclusion. Fig.~\ref{fig:FidelityMLE} presents fidelity dependencies for the same measurement data, processed with an accelerated projective gradient MLE~\cite{Shang_PRA2017}. Qualitatively, the performance is the same as the one for the inset of Fig.~\ref{fig:FidelityST}.

\subsection{Estimator biasedness}

\begin{figure*}
	\centering
	\subfloat[$D = 8, P = 100$.]
	{
		\includegraphics[height=35mm]{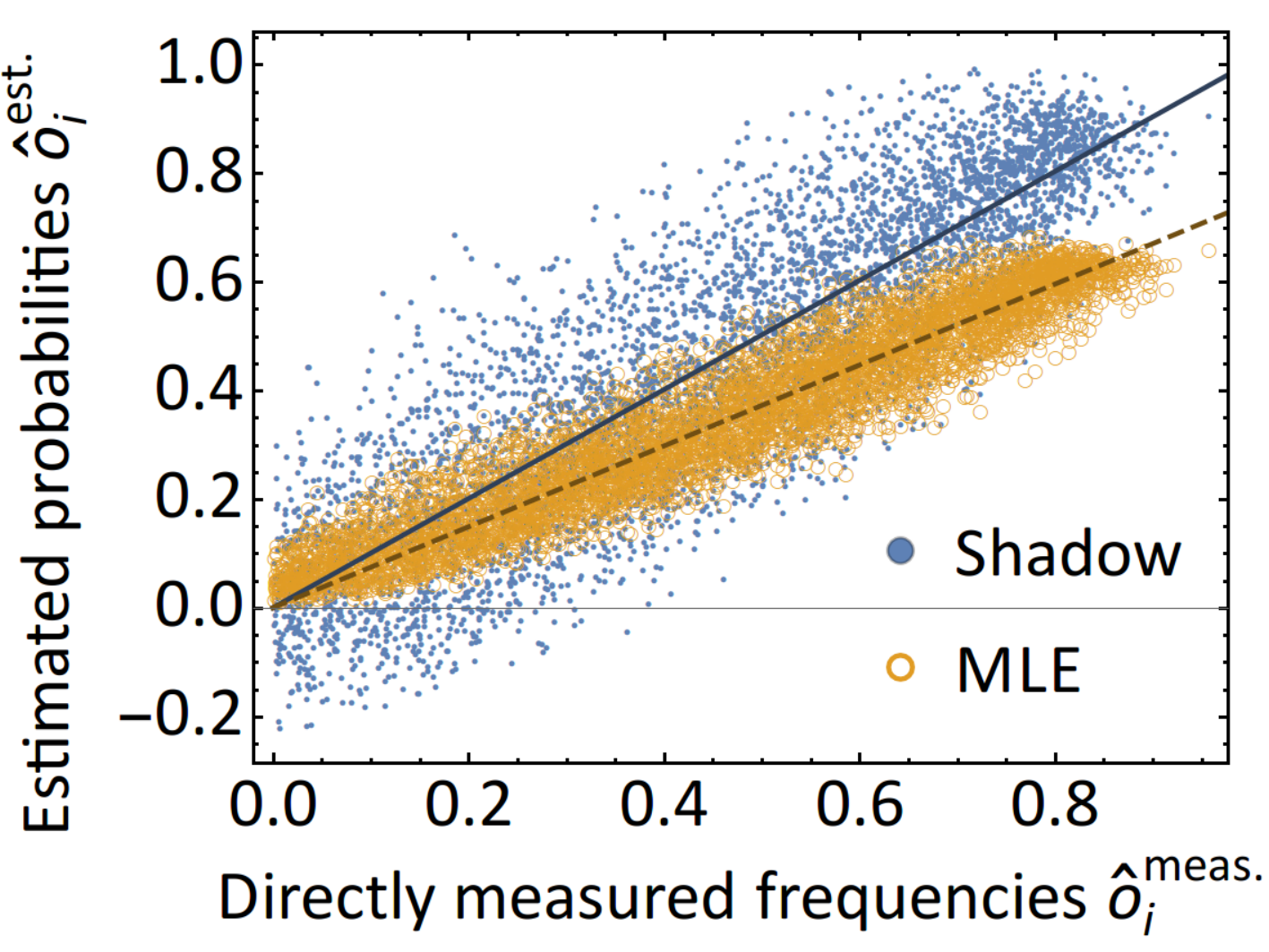}
		\label{fig:ValidationD8PLow}
	}
	\subfloat[$D = 8, P = 10^4$.]
	{
		\includegraphics[height=35mm]{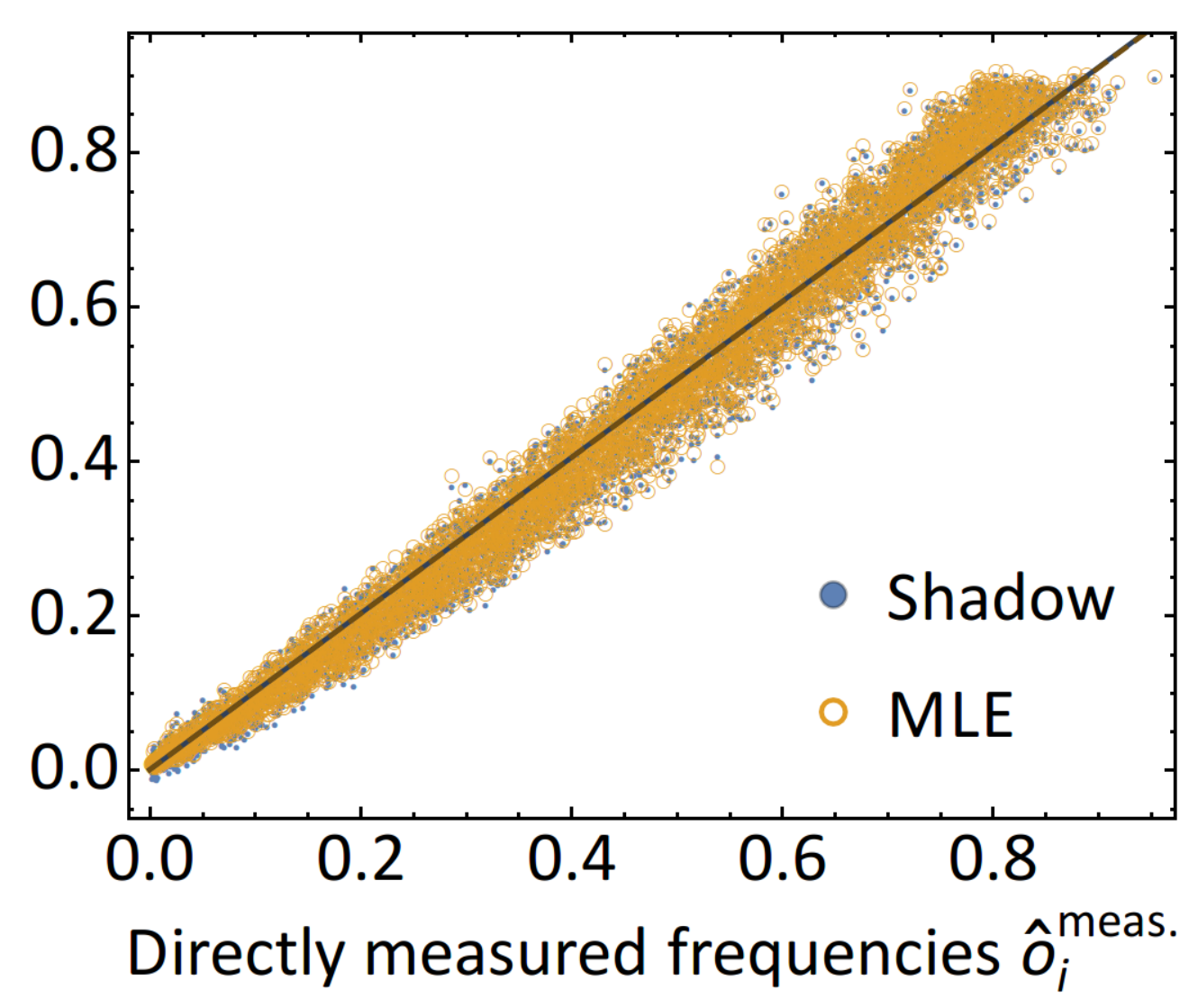}
		\label{fig:ValidationD8PHigh}
	}
	\subfloat[$D = 32, P = 300$.]
	{
		\includegraphics[height=35mm]{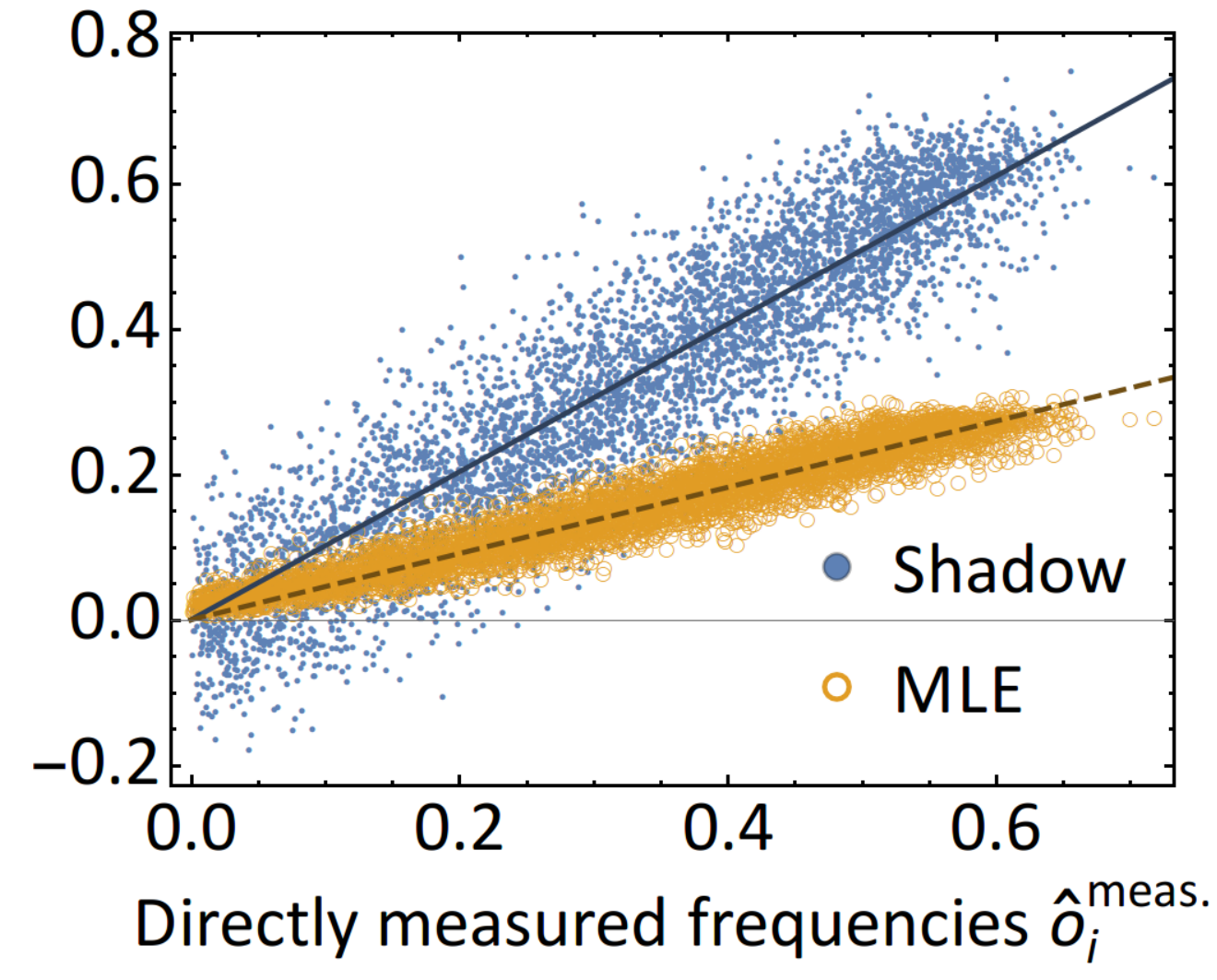}
		\label{fig:ValidationD32PLow}
	}
	\subfloat[$D = 32, P = 10^4$.]
	{
		\includegraphics[height=35mm]{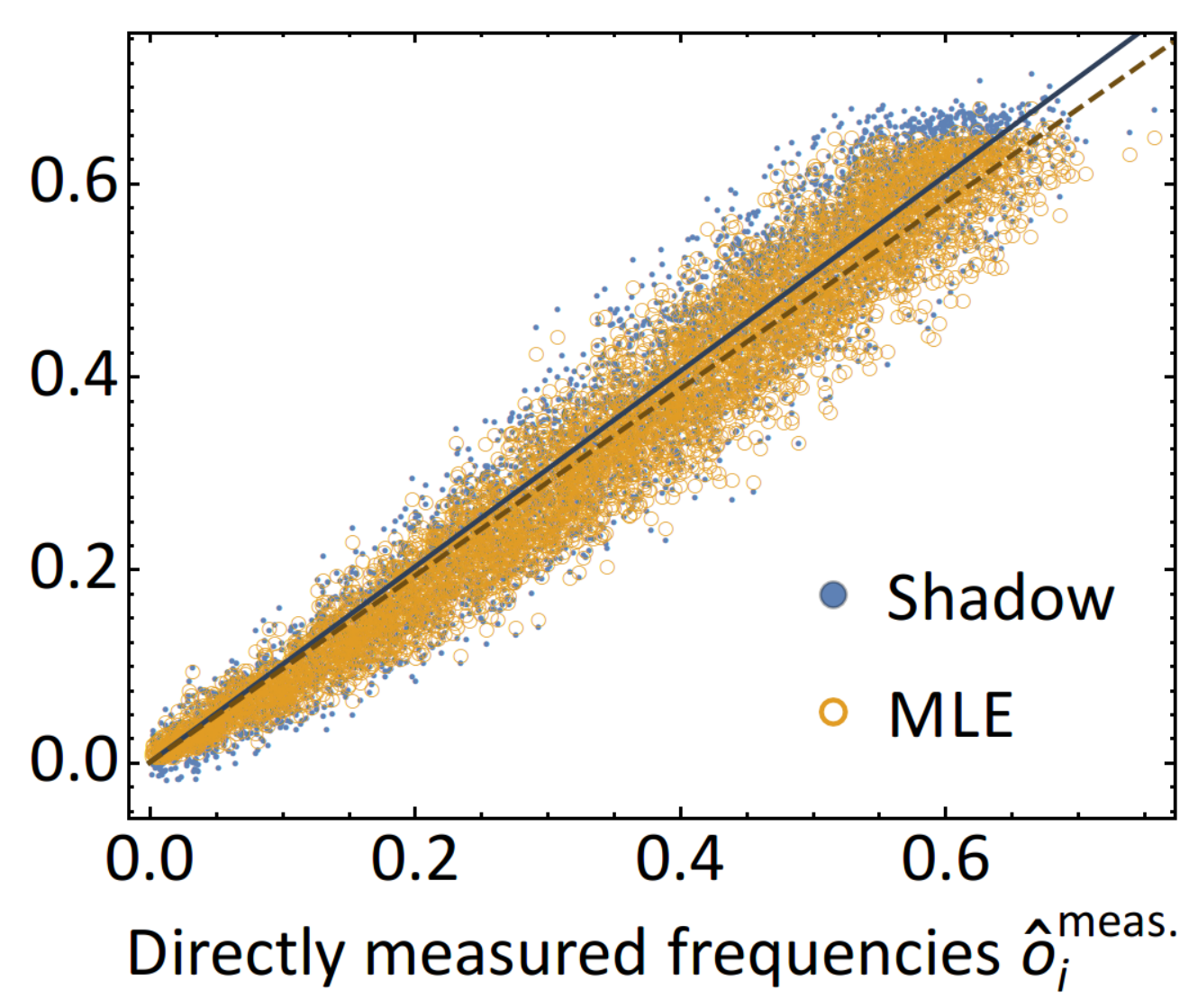}
		\label{fig:ValidationD32PHigh}
	}
	\caption{Comparison of correlation plots obtained using  shadow tomography and maximum likelihood estimation (MLE) for different system dimensions~$D$ and number of stabilizer measurements~$P$. Prediction of operator mean values $\hat o_i^\text{est.}$ using tomographic methods versus directly measured quantities $\hat o_i^\text{meas.}$ is depicted. Straight lines are best-fit dependencies of the form $\hat o_i^\text{est.} = \beta \hat o_i^\text{meas.}$ (solid lines---shadow tomography, dashed lines---MLE). For overcomplete number of measurements $P = 10^4$ (Fig.~\ref{fig:ValidationD8PHigh} and~\ref{fig:ValidationD32PHigh}) both methods result in the same unbiased predictions with a proportionality coefficient $\beta \approx 1$. For low values of~$P$ (Fig.~\ref{fig:ValidationD8PLow} and~\ref{fig:ValidationD32PLow}) MLE predictions are highly biased and underestimate~$\hat o_i^\text{meas.}$, while classical shadow assessments are still unbiased.}
	\label{fig:Validation}
\end{figure*}

Preparation fidelity is not the only quantity estimated with heavy bias employing the MLE method when the number~$P$ of stabilizer measurements is low. All projectors~$O_i$ with the near-unity mean value~$o_i \approx 1$ will be underestimated. To check this hypothesis, we carried out another correlation-like test, similar to those in Fig.~\ref{fig:ValidationD8Haar}. Both a classical shadow~$\hat \rho_\text{CS}$ and a maximum likelihood estimate~$\hat \rho_\text{MLE}$ are calculated using the same stabilizer measurements outcomes. Then as usual, these estimators are substituted into Eq.~\eqref{eq:ohat} to give~$\hat o_i^\text{est.}$ for a set of 5000 randomly chosen projectors~$O_i$.

We note that the difference between shadow and MLE tomography is visible the most in the region, where~$o_i \approx 1$. At the same time, Haar-distributed projectors~$O_i$ tend to have low mean values $o_i$ (on average $\langle o_i \rangle = 1/D$), which do not suit well for this kind of test. Therefore, we select random projectors~$O_i$ with uniformly distributed expectations $o_i$. To do so, they should be adjusted to the true state. In particular, we use projectors~$O_i = \dyad{\phi}$ onto a random vector~$\ket{\phi}$:
\begin{equation}
\ket{\phi} = \sqrt{a} \ket{\psi} + \sqrt{1 - a} \frac{\ket{g} - \ket{\psi} \scalprod{\psi}{g}}{\| \ket{g} - \ket{\psi} \scalprod{\psi}{g} \|}, \label{eq:UniprobMeas}
\end{equation}
where~$\ket{g}$ is a vector with real and imaginary parts of its elements being independent Gaussian random variables with zero mean and unit variance and~$a$ is distributed uniformly on the interval $[0, 1]$. It is easy to verify that $|\scalprod{\psi}{\phi}|^2 = a$, so if $\ket{\psi}$ is the true state, then, indeed, $o_i = a$ has uniform distribution. We take a close approximation---a compensated prepared state---as the vector~$\ket{\psi}$. The choice of distribution for $\ket{g}$ ensures that a ``circle'' determined by the equation $|\scalprod{\psi}{\phi}|^2 = \text{const}$ is also populated uniformly~\cite[Appendix C]{Kulik_PRA16}.

Obtained predictions~$\hat o_i^\text{est.}$ against directly measured mean values~$\hat o_i^\text{meas.}$ are shown in Fig.~\ref{fig:Validation} for system dimensions $D = 8$ and $32$. We investigated two cases: estimates for small number of measurements~$P$ (100 for $D = 8$ and 300 for $D = 32$) and large $P = 10^4$. As expected, shadow tomography gives unbiased estimates in all situations: $\hat o_i^\text{est.} \approx \hat o_i^\text{meas.}$. MLE performs differently, since it is only an \emph{asymptotically} unbiased estimator. For small~$P$, although the predictions are more condensed compared to classical shadows (there is less volatility), they are underestimated and concentrate near a line $\hat o_i^\text{est.} = \beta \hat o_i^\text{meas.}$ with proportionality constant $\beta < 1$ (see Table~\ref{tab:ValidationFit} for best-fit parameters). For large~$P$ the behavior equalizes: MLE approaches the asymptotic and produces unbiased estimates that almost coincide with those calculated using classical shadows. We connect the observed flat-top cutoff under $o_i^\text{est.} = 1$ in Figs.~\ref{fig:ValidationD8PHigh} and~\ref{fig:ValidationD32PHigh} with that our choice of $\ket{\psi}$ in Eq.~\eqref{eq:UniprobMeas} differs from the true state $\rho$.

\begin{table}[h]
	\caption{\label{tab:ValidationFit}Pearson correlation coefficient~$r$ and proportionality coefficient~$\beta$ of the data in Fig.~\ref{fig:Validation} obtained using classical shadows (CS) and maximum likelihood estimation (MLE) for different system dimensions~$D$ and number of stabilizer measurements~$P$.}
	\begin{ruledtabular}
		\begin{tabular}{cccccc}
			$D$ & $P$ & $r_\text{CS}$ & $r_\text{MLE}$ & $\beta_\text{CS}$ & $\beta_\text{MLE}$ \\
			\hline

			8 & 100 & 0.870 & 0.949 & $1.004 \pm 0.004$ & $0.745 \pm 0.002$\rule{0pt}{11pt}\\

			%8 & 300 & 0.969 & 0.989 & $1.019 \pm 0.002$ & $0.922 \pm 0.001$\rule{0pt}{11pt}\\			
			8 & $10^4$ & 0.990 & 0.990 & $1.011 \pm 0.001$ & $1.010 \pm 0.001$ \\
			32 & 300 & 0.915 & 0.957 & $1.016 \pm 0.003$ & $0.455 \pm 0.001$ \\
			32 & $10^4$ & 0.971 & 0.974 & $1.013 \pm 0.002$ & $0.967 \pm 0.002$ \\
		\end{tabular}
	\end{ruledtabular}
\end{table}

\section{Conclusion\label{sec:Conclusion}}

We have experimentally demonstrated that classical shadows, i.\,e., linear inversion estimators for quantum states can be used to faithfully predict expectation values of observables from very few measurements. Specifically, we have shown that the estimator obtained from the classical shadow is unbiased and provides correct expectation values even when the number of measurements used for estimation is significantly less than required for full state reconstruction. As a special case we performed estimation of fidelity with the ``true'' state and shown that it is also possible with few measurements.

Our treatment reformulates the results of~\cite{Kueng_NatPhys2020} in terms of a typical quantum optical experiment and is then applied to experimental data for high-dimensional spatial states of photons. The versatility of the chosen experimental platform allows us to realize arbitrary projective measurements; however, we have demonstrated that in full accordance with the theoretical predictions, the procedure works well when the measurement set is restricted, for example, to projections on the stabilizer states. This is an important feature of the protocol, making it a scalable approach to quantum property estimation.

The framework of shadow tomography was recently extended with online learning protocols~\cite{Aaronson_JSM2019,Chen_Arxiv2020}, which from an operational point of view are close in spirit to the one implemented in this work. Comparing the performance of these approaches on real experimental data is an interesting direction for further research.

\begin{acknowledgments}
We acknowledge financial support from the Russian Foundation for Basic Research (RFBR Project No. 19-32-80043 and RFBR Project No. 19-52-80034) and support under the Russian National Technological Initiative via MSU Quantum Technology Centre.
\end{acknowledgments}	

\appendix

\section{Explicit procedure for generation of random stabilizer states\label{sec:StabilizerGeneration}}
In this section, we describe the procedure for the explicit generation of random, uniformly distributed, stabilizer states. By ``explicit'' we mean that the whole $n$-qubit state vector~$\ket{\psi}$ of $2^n$ amplitudes is calculated. This requirement comes from the fact that in photonic experiments like the one performed here the preparation and measurement stage has no natural decomposition in terms of quantum gates and requires the explicit specification of the state vectors.

The set~$\mathcal S$ of all stabilizer states is finite, its cardinality~$C(n)$ is~\cite{Gottesman_PRA2004}:
\begin{equation}
C(n) = 2^n \prod_{k = 1}^{n} (2^k +1) \approx 2^{n^2/2}. \label{eq:C(n)}
\end{equation}
Uniform sampling means that each state $\ket{\psi_i} \in \mathcal S$ is selected with equal probability. A naive approach would be to generate a random index $i = 1, \dots, C(n)$, and pick the corresponding state~$\ket{\psi_i}$ from a pre-generated set~$\mathcal S$. However, the huge cardinality makes it infeasible.

When working with stabilizer states on a classical computer, one usually resorts to their stabilizer operators rather than vectors, since this implicit description allows very efficient (polynomial in the number of qubits~$n$) storage scheme and simulation of Clifford gate actions. This fact is known as the Gottesman--Knill theorem~\cite{Gottesman_Proc1998,Gottesman_PRA2004}.

Therefore, an evident practical approach for constructing a random~$\ket{\psi}$ is to generate a set of its stabilizers~$\{g_i\}_{i=1}^n$: $g_i \ket{\psi} = \ket{\psi}$. This can be done efficiently by utilizing, e.\,g., a method from Ref.~\cite{Koenig_JMP2014}, which enumerates all possible stabilizers circuits~$U$: $\ket{\psi} = U \ket{0}$. Then given the stabilizers $\{g_i\}$ the state $\ket{\psi}$ is obtained using the relation:
\begin{equation}
\dyad{\psi} = \prod_{i = 1}^{n} \frac{1+g_i}{2}. \label{eq:StabToState}
\end{equation}
Thus, the conversion from a stabilizer formalism to an explicit form involves three exponentially hard routines:
\begin{enumerate}
	\item an explicit construction of stabilizer matrices $g_i$---$\mathcal O(n \cdot 2^{2n})$ operations,
	\item product evaluation---$\mathcal O(n \cdot 2^{3n})$,
	\item recovering of $\ket{\psi}$ from $\dyad{\psi}$---$\mathcal O(2^n)$.
\end{enumerate}
The overall complexity is dominated by the second stage (note the power $3n$).

Of course, the complexity of an explicit $n$-qubit state generation is always exponential and cannot be lower than $O(2^n)$---the number of operations required to address every element in the vector. But the power index dramatically affects the performance. For the method above, it is $3n$, while a decrease to $n$ is possible. Below, we describe an algorithm that requires $\mathcal O(2^n\poly(n))$ operations.

Let us start with the universal form of any~$\ket{\psi} \in \mathcal S$~\cite{Dehaene_PRA2003, Nest_QIC2010}:
\begin{equation}
\ket{\psi} \propto \sum_{x \in \F_2^k} (-1)^{q(x)} i^{l(x)} \ket{Rx + t}, \label{eq:GenStabForm}
\end{equation}
where $x \in \F_2^k$ and $t \in \F_2^n$ are, respectively, $k$- and $n$-dimensional binary vectors, $k \le n$, $q(x)$ is a quadratic form on $\F_2^k$, $l(x)$ is a linear one, and $R \in \F_2^{n \times k}$ is an $n \times k$ binary matrix with rank~$k$. Summation and multiplication in~\eqref{eq:GenStabForm} are carried modulo two, since we work in a Galois field~$\F_2$. Also, we identify a binary representation of a given integer number $x$ with the corresponding binary vector and vice versa.
%In addition to the conventional normalization to unit 2-norm, $\scalprod{\psi}{\psi} = 1$, we admit that the first nonzero element of $\ket{\psi}$ is real and positive (phase normalization).

Representation~\eqref{eq:GenStabForm} reveals some properties of stabilizer states. Up to normalization, each element of the state can be either $\pm 1$, $\pm i$, or 0. The number of nonzero elements is always $2^k$, $0 \le k \le n$, which is simply the number of different vectors $x$ in $\F_2^k$. By convention, we define $\F_2^0 = \{ 0 \}$.

In our sampling algorithm the set~$\mathcal S_k$, which by definition contains all $n$-qubit stabilizer states with $2^k$ nonzero elements, plays an important role. In Theorem~\ref{thm:SampleSk} we show how to sample $\ket{\psi} \in \mathcal S_k$ uniformly. Then in Theorem~\ref{thm:SkCardinality} we calculate the cardinality $C(n, k)$ of~$\mathcal S_k$. Finally, we combine these results in Theorem~\ref{thm:SampleS}, where an algorithm for uniform sampling of the whole set $\mathcal S = \bigcup_{k = 0}^n \mathcal S_k$ is presented.

\begin{theorem}\label{thm:SampleSk}
Fix $k = 0, \dots, n$. Let~$\ket{\psi}$ be an $n$-qubit state of the form
\begin{align}
\ket{\psi} &:= \ket{t}, \text{ if } k = 0, \nonumber \\
\ket{\psi} &:= \frac{1}{2^{k/2}} \sum_{x=0}^{2^k - 1} (-1)^{x^T Q x} i^{c^T x} \ket{Rx + t}, \text{ if } k \ne 0, \label{eq:GenStabForm2}
\end{align}
where $x \in \F_2^k$. Quantities $Q \in \F_2^{k \times k}$, $c \in \F_2^k$, $t \in \F_2^n$ are random with independent and identically distributed (i.\,i.\,d) elements 0 or 1 appearing with probability $1/2$. $R \in \F_2^{n \times k}$ ($\rank R = k$) is a random matrix sampled uniformly from the set of all rank-$k$ matrices. Then $\ket{\psi}$ is uniformly sampled from $\mathcal S_k$.
\end{theorem}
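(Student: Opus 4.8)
The plan is to prove uniformity by showing that the sampling map $(Q,c,t,R)\mapsto\ket\psi$ sends the uniform distribution on its parameters to the uniform distribution on $\mathcal S_k$; equivalently, that every $\ket\psi\in\mathcal S_k$ has the same number of parameter preimages (a constant fiber), together with the fact that the construction never leaves $\mathcal S_k$. That the output always lies in $\mathcal S_k$ is immediate: \eqref{eq:GenStabForm2} is the special case of the universal form \eqref{eq:GenStabForm} with $\rank R=k$, so it is a genuine stabilizer state, and $\rank R=k$ makes $x\mapsto Rx+t$ injective, producing exactly $2^k$ nonzero amplitudes. Conversely, the cited universal form guarantees that every element of $\mathcal S_k$ is reached, which will also follow a posteriori once the fibers are shown to be nonempty.

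First I would decouple the two data that determine a stabilizer state: its support, the $k$-dimensional affine subspace $\mathcal A=\{Rx+t\}$, and the phase pattern on $\mathcal A$. The support depends on $(R,t)$ only through the column space $V=\mathrm{colspace}(R)$ and the coset $t+V$. Since $R$ is uniform over rank-$k$ matrices, each $k$-dimensional $V$ receives exactly $|\mathrm{GL}_k(\F_2)|$ matrices (the ordered bases of $V$), and uniform $t$ spreads over the $2^k$ points of each coset; hence every $k$-flat $\mathcal A$ arises from precisely $|\mathrm{GL}_k(\F_2)|\cdot 2^k$ pairs $(R,t)$, a count independent of $\mathcal A$. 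Next, for a fixed parametrization $x\mapsto Rx+t$ of $\mathcal A$, I would count the pairs $(Q,c)$ realizing a prescribed admissible phase pattern. The key algebraic observation is that over $\F_2$ one has $x^TQx=\sum_i Q_{ii}x_i+\sum_{i<j}(Q_{ij}+Q_{ji})x_ix_j$, so $x\mapsto x^TQx$ depends on $Q$ only through its diagonal and the symmetric sums $Q_{ij}+Q_{ji}$; hence $Q\mapsto(-1)^{x^TQx}$ is surjective onto the $\{\pm1\}$-valued quadratic phases and exactly $2^{\binom{k}{2}}$-to-one. Combined with the freedom in $c$, one checks that $(Q,c)$ surjects onto every admissible (quadratic-type) phase function, each with the same number $m_k$ of preimages---a multiplicity depending on $k$ alone, the factor $2^{\binom{k}{2}}$ coming from the symmetric degeneracy just described. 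It is this constancy, rather than the exact value of $m_k$, that the argument needs.

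The one genuinely delicate step---and the one I expect to be the main obstacle---is reconciling the count with the fact that \eqref{eq:GenStabForm2} pins the amplitude at the anchor $t$ to the positive real value $2^{-k/2}$, whereas $\mathcal S_k$ consists of rays. For a fixed state $\ket\psi$ with amplitudes $2^{-k/2}w_p$ ($w_p\in\{1,i,-1,-i\}$), matching the output to the ray of $\ket\psi$ forces the global phase $e^{i\theta}=w_t^{-1}$, after which the requirement becomes $(-1)^{x^TQx}i^{c^Tx}=w_t^{-1}w_{Rx+t}=:W(x)$ with $W(0)=1$. I would argue that $W$ is again an admissible phase function for every choice of anchor $t$, because the class of quadratic-type phase functions is closed under the affine reparametrization $x\mapsto Rx+t$ and under multiplication by a constant phase; hence each valid $(R,t)$ contributes exactly $m_k$ pairs $(Q,c)$. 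Summing over the $|\mathrm{GL}_k(\F_2)|\cdot2^k$ admissible $(R,t)$ yields a fiber of size $|\mathrm{GL}_k(\F_2)|\cdot2^k\cdot m_k$, independent of $\ket\psi$; since all parameter tuples are equiprobable, the pushforward is uniform on $\mathcal S_k$.

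An alternative that sidesteps the anchor bookkeeping is to observe that the affine-and-phase subgroup of the Clifford group---generated by bit translations, reversible linear maps on basis labels, and the diagonal gates---preserves $\mathcal S_k$, acts transitively on it, and transforms $(Q,c,t,R)$ within their uniform distribution; invariance plus transitivity then gives uniformity at once, modulo establishing the transitivity claim. Either way, the real work lies in the careful verification of the constant fiber size, tracking simultaneously the $2^{\binom{k}{2}}$ quadratic-form degeneracy, the $|\mathrm{GL}_k(\F_2)|\cdot2^k$ parametrization freedom, and the global-phase/anchor ambiguity, and confirming that their product is independent of the target state.
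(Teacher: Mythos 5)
Your proposal is correct, and at its core it uses the same counting ingredients as the paper's proof: the $2^{k(k-1)/2}$-fold degeneracy of matrices $Q$ representing a fixed quadratic form, the bijection between vectors $c$ and linear forms, the $|\mathrm{GL}_k(\F_2)|$ ordered bases of a fixed direction space, and the $2^k$ translations producing a fixed $k$-flat. Where you genuinely differ is in how these counts are assembled into the uniformity conclusion. The paper shows that each of the four structures (quadratic form, linear form, vector subspace, affine subspace) is individually uniformly distributed, and then concludes uniformity of $\ket{\psi}$ from the informal assertion that these structures ``act independently''; it never examines the residual many-to-one-ness of the passage from structures to states. You instead compute the fiber of the full map $(Q,c,R,t)\mapsto\ket{\psi}$ over a fixed state and show it is the constant $|\mathrm{GL}_k(\F_2)|\cdot 2^k\cdot 2^{k(k-1)/2}$, which forces you to confront exactly the point the paper glosses over: a fixed state is reproduced from \emph{every} choice of anchor $t$ in its support and basis $R$ of its direction space, provided $(Q,c)$ are transformed accordingly. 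That step needs the class of phase functions $(-1)^{\text{quadratic}}\, i^{\text{linear}}$ to be closed under affine relabelings $x \mapsto Sx+y$ and under normalization at the anchor (including the mod-4 carry arising when powers of $i$ are divided, which gets absorbed as a $(-1)^{\text{linear}}$ correction into the diagonal of $Q$). Your outline asserts this closure rather than proving it, but it is the right lemma and it does hold, so your argument is, if anything, tighter than the paper's at its weakest step; your alternative route via a transitive group action would also work but replaces this lemma by a transitivity proof of comparable weight. A consistency check available to you: dividing the total number of parameter tuples, $\binom{n}{k}_2|\mathrm{GL}_k(\F_2)|\cdot 2^n\cdot 2^{k^2}\cdot 2^k$, by your fiber size reproduces exactly the cardinality $C(n,k)=2^{n+k(k+1)/2}\binom{n}{k}_2$ of Theorem~\ref{thm:SkCardinality}.
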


\begin{proof}
Again, consider~\eqref{eq:GenStabForm}. Every quadratic form~$q(x)$ can be expressed as a sum: $q(x) = x^T Q x + b^T x + x_0$. The constant~$x_0$ affects only the global phase of~$\ket{\psi}$ and can be omitted. The linear term $b^T x$ is already enclosed in $x^T Q x$. Indeed, in the expansion of $x^T Q x$, there is a diagonal term $Q_{ii} x_i^2 = Q_{ii} x_i$, since $x_i^2 = x_i$ for any $x_i \in \F_2$. Therefore, without loss of generality, $q(x) = x^T Q x$, where $Q$ is an arbitrary $k \times k$ binary matrix. Analogously, a constant term can be neglected in the linear form: $l(x) = c^T x + x_0 \sim c^T x$, where $c \in \F_2^k$ is an arbitrary binary vector. Started from the form~\eqref{eq:GenStabForm}, we have already arrived at a more specific expression~\eqref{eq:GenStabForm2}.

Let us prove that $\ket{\psi}$ is sampled uniformly. Quantities $Q, c, R, t$ are associated with their own structures, respectively, a quadratic form~$\mathcal Q$, a linear form $L$, a $k$-dimensional vector subspace~$V_k$, and an affine subspace $A_k$:
\begin{gather}
\mathcal Q = f_Q(Q) = \{(x, x^T Q x) \mid x \in \F_2^k\}, \label{eq:f_Q} \\
L = f_c(c) = \{(x, c^T x) \mid x \in \F_2^k\},  \label{eq:f_c} \\
V_k = f_R(R) = \{Rx \mid x \in \F_2^k\},  \label{eq:f_R} \\
A_k = f_t(t) = \{y + t \mid y \in V_k\}. \label{eq:f_t}
\end{gather}
The corresponding maps $f_Q, f_c, f_R, f_t$ are in general surjective, i.\,e., they may map many different quantities to a single structure. An affine subspace~$A_k$ determines positions of nonzero elements in~$\ket{\psi}$, while forms~$\mathcal Q$ and~$L$ define the order in which $\pm 1, \pm i$ appear. In this sense, $\mathcal Q, L$, and~$A_k$ act independently, so the state~\eqref{eq:GenStabForm2} is uniformly distributed if each of these three structures is sampled uniformly.

Obviously, $Q, c$, and $t$ are generated uniformly, because their elements are i.\,i.\,d. random variables with $\prob(0) = \prob(1) = 1/2$; $R$ is sampled uniformly as the theorem condition states. However, for a general surjective map, $\omega = f(\xi), \xi \in \Xi, \omega \in \Omega$, a uniform sampling of the domain~$\Xi$ does not imply the same for its image~$\Omega$. Fortunately, for the maps~\eqref{eq:f_Q}--\eqref{eq:f_t} cardinality of a preimage for each element in a codomain is the same (see below):
\begin{equation}
|f^{-1}(\omega_1)| = |f^{-1}(\omega_2)|, \quad \forall\, \omega_{1,2} \in \Omega, \label{eq:PreimageCardinality}
\end{equation}
where~$|\cdot|$ denotes the cardinality evaluation. Any map~$f$ satisfying~\eqref{eq:PreimageCardinality} has the property that a uniform sampling of the domain~$\Xi$ results in a uniform sampling of its codomain~$\Omega$. 

Let us start with proving that the map~$f_Q$ complies~\eqref{eq:PreimageCardinality}. Only the sum $(Q_{ij} + Q_{ji})x_i x_j, i < j,$ matters in the expression $x^T Q x$. So adjoint nondiagonal elements~$Q_{ij}$ and~$Q_{ji}$ can be replaced by their equivalents: a pair $(Q_{ij}, Q_{ji}) = (0, 1)$ is equivalent to $(1, 0)$, and $(0, 0) \sim (1, 1)$. There are $k(k-1)/2$ such pairs in a $k \times k$ matrix~$Q$ and each pair has two equivalent values. Therefore, for every~$\mathcal Q$: $|f^{-1}_Q(\mathcal Q)| = 2^{k(k-1)/2}$.

There is a one-to-one correspondence between vectors~$c$ and linear forms~$L$, so the map~$f_c$ is bijective, and $|f^{-1}_c(L)| = 1$.

Any nondegenerate $n \times k$ matrix~$R$ defines some basis of a $k$-dimensional vector subspace~$V_k$ of a vector space~$V_n$ and vice versa. The number of different bases in a given subspace~$V_k$ depends solely on the dimension~$k$, not on the contents of~$V_k$. On the other hand, the number of bases is equal to $|f^{-1}_R(V_k)|$, so the condition~\eqref{eq:PreimageCardinality} holds for the map~$f_R$.

Affine subspaces $f_t(t_1)$ and $f_t(t_2)$~\eqref{eq:f_t} for different $t_1, t_2 \in \F_2^n, t_1 \ne t_2$, either coincide or do not intersect. Indeed, suppose partial intersection and take $z \in f_t(t_1) \cap f_t(t_2)$, then $z = y_1 + t_1 = y_2 + t_2, \: y_1, y_2 \in V_k$. Consequently, $t_2 = t_1 + y_1 - y_2 = t_1 + \Delta y, \Delta y \in V_k$, and thus $f_t(t_2) \subset f_t(t_1)$. Analogously, one can prove that $f_t(t_1) \subset f_t(t_2)$. These two mutual inclusions mean that affine subspaces coincide, $f_t(t_1) = f_t(t_2)$, which contradicts the initial assumption.

It also follows from the above reasoning that two affine subspaces coincide, iff $t_2 = t_1 + y$, where $y \in V_k$. Therefore, for each $t_1 \in \F_2^n$ there exist $2^k$ vectors $t_2$ that result in the same affine subspace. So, cardinality $|f^{-1}_t(A_k)| = 2^k$ is the same for all $A_k$, and the condition~\eqref{eq:PreimageCardinality} is satisfied.

We have shown that affine subspaces~$A_k$, viewed as a shift of the \emph{fixed} vector subspace~$V_k$ by a random vector~$t$, are uniformly distributed. Because, as we proved earlier, vector subspaces are also sampled uniformly, the set of \emph{all} affine subspaces is sampled uniformly.

By pointing out that all necessary structures, namely, $\mathcal Q$, $L$, and~$A_k$, are uniformly distributed, we complete the proof.
\end{proof}

Theorem~\ref{thm:SampleSk} does not tell anything about how to sample matrices~$R$. In our implementation we use the simplest possible method. First, fill $n \times k$ matrix~$R$ with random bits, where $\prob(0) = \prob(1) = 1/2$, and compute matrix rank over $\F_2$. If $\rank R = k$, then stop, otherwise repeat the procedure. We have taken a routine for matrix rank calculation that requires $\mathcal O(n^2 k)$ operations.

\begin{theorem}\label{thm:SkCardinality}
The cardinality of $\mathcal S_k$ is
\begin{equation}
C(n, k) = 2^{n + \frac{k(k+1)}{2}} \binom{n}{k}_2, \label{eq:Cnk}
\end{equation}
where
\begin{equation}
\binom{n}{k}_2 = \prod_{j=0} ^{k-1} \frac{2^n - 2^j}{2^k - 2^j},
\end{equation}
is a 2-binomial (Gaussian) coefficient.
\end{theorem}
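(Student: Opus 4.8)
The plan is to exploit the parametrization established in Theorem~\ref{thm:SampleSk}: every $\ket{\psi} \in \mathcal{S}_k$ is written in the form~\eqref{eq:GenStabForm2} and is determined, up to a global phase, by three independent combinatorial structures --- the $k$-dimensional affine subspace $A_k \subset \F_2^n$ fixing the support (the positions of the $2^k$ nonzero amplitudes), the quadratic form $\mathcal{Q}$, and the linear form $L$ fixing the phase pattern $(-1)^{x^T Q x} i^{c^T x}$. I would therefore show that the assignment $(A_k, \mathcal{Q}, L) \mapsto \ket{\psi}$ is a bijection, so that $C(n,k)$ factorizes into the product of the three cardinalities, and then count each factor separately.

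First I would establish the bijection. Surjectivity is immediate from the universal form~\eqref{eq:GenStabForm}, which represents every stabilizer state. For injectivity, I would argue that distinct supports $A_k$ give states that are nonzero on different index sets and are hence distinct; and, after fixing one labeling of the $2^k$ positions of a given $A_k$ by $x \in \F_2^k$, the amplitude phase at each position lies in $\{1, -1, i, -i\}$, with its real-versus-imaginary character recovering $c^T x$ (hence $L$) and its sign recovering $x^T Q x$ (hence $\mathcal{Q}$). Thus for a fixed support the phase pattern determines $(\mathcal{Q}, L)$ uniquely, and conversely every pair yields a valid member of~$\mathcal{S}_k$.

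Next I would count the three factors. (i) The number of quadratic forms: the domain of $f_Q$ contains $2^{k^2}$ matrices and, by the preimage computation already carried out in the proof of Theorem~\ref{thm:SampleSk}, each form $\mathcal{Q}$ has exactly $2^{k(k-1)/2}$ preimages, giving $2^{k^2 - k(k-1)/2} = 2^{k(k+1)/2}$ distinct forms. (ii) The number of linear forms equals $|\F_2^k| = 2^k$, since $f_c$ is a bijection. (iii) The number of $k$-dimensional affine subspaces: the number of $k$-dimensional linear subspaces of $\F_2^n$ is the Gaussian coefficient $\binom{n}{k}_2$ (the count $\prod_{j=0}^{k-1}(2^n - 2^j)$ of ordered linearly independent $k$-tuples divided by the count $\prod_{j=0}^{k-1}(2^k - 2^j)$ of ordered bases of a fixed $k$-space), and each such subspace has $2^n/2^k = 2^{n-k}$ distinct cosets; since the direction space of an affine subspace is uniquely determined, no coset is double-counted, yielding $2^{n-k}\binom{n}{k}_2$ affine subspaces.

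Multiplying the three factors gives $C(n,k) = 2^{k(k+1)/2}\cdot 2^k \cdot 2^{n-k}\binom{n}{k}_2 = 2^{n + k(k+1)/2}\binom{n}{k}_2$, as claimed; as a sanity check one can verify that $\sum_{k=0}^n C(n,k)$ recovers the total $C(n)$ of~\eqref{eq:C(n)}. I expect the main obstacle to be the bijection step rather than the arithmetic: Theorem~\ref{thm:SampleSk} asserts only that the three structures ``act independently'' in the sense of uniform sampling, and I must upgrade this to an honest injectivity statement, taking care that $\mathcal{Q}$ and $L$ are read off relative to a single fixed labeling of each support so that the three factors genuinely multiply without hidden overcounting.
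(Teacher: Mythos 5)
Your proposal is correct and follows essentially the same route as the paper's proof: factorize each state in $\mathcal S_k$ into its three independent structures (quadratic form, linear form, affine support), count $2^{k(k+1)/2}$, $2^k$, and $2^{n-k}\binom{n}{k}_2$ for these respectively, and multiply. The only difference is one of rigor, not of method --- you make explicit the support/phase-pattern bijection (reading $L$ off the real-versus-imaginary character and $\mathcal Q$ off the sign, relative to a fixed labeling of each support) that the paper leaves implicit in its phrase that the three structures ``act independently.''
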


\begin{proof}
As in the proof of theorem~\ref{thm:SampleSk}, we can divide evaluation of~$C(n, k)$ by counting all distinct quadratic forms $\mathcal Q$~\eqref{eq:f_Q}, linear forms $L$~\eqref{eq:f_c}, affine subspaces $A_k$~~\eqref{eq:f_t}, and multiplying the results.

The whole set of matrices $Q \in \F_2^{k \times k}$ has cardinality $2^{k^2}$. But it is divided into groups of $2^{k(k-1)/2}$ matrices, where each group corresponds to the same quadratic form~$\mathcal Q$. Therefore, the total number of distinct forms is given by the ratio of these quantities and is equal to $C_\text{quad.} = 2^{k(k+1)/2}$.

There are $C_\text{lin.} = 2^k$ different possible linear forms over $\F_2^k$.

The total number of $k$-dimensional vectors subspaces~$V_k$ of an $n$-dimensional vector space over $\F_2$ is equal to a 2-binomial coefficient $\binom{n}{k}_2$~\cite{Goldman_SAM1970}. Each of these subspaces can be shifted in $2^k$ ways (by adding a vector $t \in V_k$, see proof of theorem~\ref{thm:SampleSk}) resulting in the same affine subspace~$A_k$. This gives $2^n/2^k$ different cosets $V_k + t$. Therefore, the total number of affine subspaces~$A_k$ is equal to product $C_\text{aff.} = 2^{n-k} \binom{n}{k}_2$.

By multiplying the numbers $C_\text{quad.}, C_\text{lin.}$, and $C_\text{aff.}$, we obtain expression~\eqref{eq:Cnk}.
\end{proof}

Using the $q$-binomial theorem~\cite{Goldman_SAM1970}, it is easy to check that, indeed, $\sum_{k = 0}^n C(n, k) = C(n)$ [see Eq.~\eqref{eq:C(n)}].

\begin{theorem}\label{thm:SampleS}
Choose integer $k$ randomly with probability $\prob(k) = C(n, k)/C(n)$, $0 \le k \le n$, and generate $\ket{\psi} \in \mathcal S_k$ according to the theorem~\ref{thm:SampleSk}. Then $\ket{\psi}$ is uniformly sampled from $\mathcal S$.
\end{theorem}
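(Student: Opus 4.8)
The plan is to prove this by the law of total probability, exploiting that the strata $\mathcal S_k$ form a \emph{partition} of $\mathcal S$, so that stratified sampling with the correct stratum weights reproduces the uniform distribution. Concretely, the target is to show that every fixed state $\ket{\psi} \in \mathcal S$ is output with the same probability $1/C(n)$, which is the definition of uniform sampling on the finite set $\mathcal S$.

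First I would establish that $\mathcal S = \bigcup_{k=0}^n \mathcal S_k$ is a disjoint union. This is where the representation~\eqref{eq:GenStabForm} does the work: every $\ket{\psi} \in \mathcal S$ has exactly $2^k$ nonzero amplitudes for some unique integer $k$ with $0 \le k \le n$ (this is precisely the number of distinct vectors $x \in \F_2^k$). Hence the index $k$ is a well-defined function of the state, the sets $\mathcal S_k$ are pairwise disjoint, and by definition $\mathcal S_k$ collects all states with $2^k$ nonzero entries, so their union is all of $\mathcal S$. I would also record that the weights $\prob(k) = C(n,k)/C(n)$ constitute a genuine probability distribution on $\{0, \dots, n\}$, since $\sum_{k=0}^n C(n,k) = C(n)$, which is exactly the identity already verified via the $q$-binomial theorem right after Theorem~\ref{thm:SkCardinality}.

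Next I would compute, for an arbitrary fixed target state $\ket{\psi}$ lying in the stratum $\mathcal S_k$, the probability that the two-step procedure outputs it. By construction the algorithm first draws the index $k$ and then invokes Theorem~\ref{thm:SampleSk}; since distinct strata are disjoint, the only way to produce $\ket{\psi}$ is to draw its own index $k$ and then to hit $\ket{\psi}$ within $\mathcal S_k$. Thus
\begin{equation}
\prob(\ket{\psi}) = \prob(k)\,\prob(\ket{\psi} \mid k) = \frac{C(n,k)}{C(n)} \cdot \frac{1}{C(n,k)} = \frac{1}{C(n)},
\end{equation}
where the conditional factor $1/C(n,k)$ is supplied by Theorem~\ref{thm:SampleSk} (uniformity within $\mathcal S_k$) together with Theorem~\ref{thm:SkCardinality} (which gives $|\mathcal S_k| = C(n,k)$). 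The cancellation of $C(n,k)$ is the crux: the stratum weight is chosen precisely to offset the stratum size, so the result is independent of both $k$ and of which state in the stratum was targeted.

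Since the output probability equals $1/C(n)$ for every $\ket{\psi} \in \mathcal S$, the induced distribution is uniform on $\mathcal S$, completing the argument. I do not expect a genuine obstacle here: the entire content has been front-loaded into Theorems~\ref{thm:SampleSk} and~\ref{thm:SkCardinality}, and the present statement is the elementary bookkeeping that combines them. The only point demanding a line of care is the \emph{disjointness} of the strata---i.e.\ that $k$ is unambiguously determined by the state---because without it the conditional-probability decomposition above would double-count states lying in more than one $\mathcal S_k$; this is guaranteed by the fixed nonzero-support count $2^k$ in~\eqref{eq:GenStabForm}.
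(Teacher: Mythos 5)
Your proof is correct and takes essentially the same route as the paper's: both condition on the drawn index $k$ and let the stratum weight $C(n,k)/C(n)$ cancel against the within-stratum uniform probability $1/C(n,k)$, yielding $1/C(n)$ for every state. The only difference is that you explicitly verify the disjointness of the strata $\mathcal S_k$ and that the weights sum to one --- points the paper leaves implicit.
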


\begin{proof}
The index~$k$ determines the set~$\mathcal S_k$ to sample, hence, $\prob(\psi \in \mathcal S_k) = \prob(k)$. Conditional probability of sampling $\ket{\psi}$ from $\mathcal S_k$ is, $\prob(\psi \mid \psi \in \mathcal S_k) = 1/C(n, k)$, because the algorithm from the theorem~\ref{thm:SampleSk} produces uniformly distributed states. Then overall probability of obtaining the state $\psi \in \mathcal S$ is equal to
\begin{equation}
\prob(\psi) = \prob (\psi \mid \psi \in \mathcal S_k ) \prob(\psi \in \mathcal S_k)  = \frac{1}{C(n)}.
\end{equation}
Every state is produced with the same probability, therefore, the sampling is uniform.
\end{proof}

The overall complexity of the procedure from theorem~\ref{thm:SampleS} is $\mathcal O(2^n\poly(n))$, because there are $2^n$ elements in $\ket{\psi}$~\eqref{eq:GenStabForm2}, and each element evaluation requires no more than $\poly(n)$ operations. Actually, in our implementation $\poly(n) = \mathcal O(n^3)$, since the complexity is dominated by calculation of $\rank R$.

We provide the Python code for the explicit sampling of random stabilizer states, which is available at GitHub~\cite{RandStab_GitHub}.

\section{Median of means estimator\label{sec:MedianOfMeans}}
Median of means estimation~\cite{Jerrum_TCS1968} is an enhancement over an empirical mean estimator that is robust against the outlier corruption. Consider~$N$ samples $x_1, \dots, x_N$ of a random variable~$x$. The empirical mean~$\hat x$ is defined as
\begin{equation}
\hat x = \frac{1}{N} \sum_i x_i.
\end{equation}
According to the Chebyshev inequality~$\hat x$ deviates from the expectation $\E x$ by more than~$\epsilon$ with probability at most~$\delta$:
\begin{equation}
\prob(|\hat x - \E x| \ge \epsilon) \le \delta = \frac{\Var x}{N \epsilon^2},
\end{equation}
where $\Var x$ denotes the variance of~$x$. Therefore, the sampling complexity for a mean estimator is
\begin{equation}
N = \frac{\Var x}{\epsilon^2 \delta}. \label{eq:SampleComplexityMean}
\end{equation}
To calculate the median of means estimator $\hat x_\text{MM}$ one, first, splits $N$ samples into $K$ batches, each containing $\lfloor N/K \rfloor$ representatives, and evaluates empirical means $\hat x_k$ over the group number~$k$. Then, $\hat x_\text{MM}$ is defined as follows:
\begin{equation}
\hat x_\text{MM} = \mathrm{median}(\hat x_1, \dots, \hat x_K),
\end{equation}
This estimator is substantially more robust, since for the choice $K = \log 1/\delta$ the following inequality holds:
\begin{equation}
\prob(|\hat x_\text{MM} - \E x| \ge \epsilon) \le \delta = \exp \left(-\frac{N \epsilon^2}{4 \Var x} \right).
\end{equation}
The sample complexity for the median of means is
\begin{equation}
N = \frac{4 \Var x}{\epsilon^2} \log 1/\delta. \label{eq:SampleComplexityMedianMean}
\end{equation}
Note the appearance of $\log 1/\delta$ instead of $1/\delta$ compared to~\eqref{eq:SampleComplexityMean}.

It turns out, however, that in the asymptotic limit $N \to \infty$, the central limit theorem (CLT) holds, and empirical mean~$\hat x$ is distributed normally: $\hat x \sim \mathcal N(\E x, \Var(x)/N)$. One can calculate the probability of deviation:
\begin{gather}
\prob(|\hat x - \E x| \ge \epsilon) = \delta = 1 - \erf \left( \sqrt\frac{N \epsilon^2}{2 \Var x} \right), \nonumber\\ 
\delta \le \exp \left(-\frac{N \epsilon^2}{2 \Var x} \right),
\end{gather}
where $\erf(y)$ is the Gauss error function, which satisfies the inequality: $1-\erf(y) \le \exp(-y^2)$ for $y \ge 0$. By expressing~$N$, we obtain:
\begin{equation}
N \le \frac{2 \Var x}{\epsilon^2} \log 1/\delta.
\end{equation}
In this case the empirical mean is also robust because it contains the logarithmic dependence $\log 1/\delta$ similar to the one for the median of means estimator.

If the classical shadow~$\hat \rho$~\eqref{eq:Shadow} is substituted into~\eqref{eq:ohat}, then the estimator~$\hat o$ takes a form of the linear combination of random variables~$f_i$. Each frequency~$f_i$ can be viewed as an empirical mean of roughly $N/P$ single-shot measurements. Since in our experiment we worked in the overexposure regime $N/P \to \infty$, the CLT conditions are satisfied with high accuracy, and the above reasonings about the same performance of mean and median of means estimators become valid.

\section{Compensation of the Gouy phase\label{sec:GouyPhase}}

\begin{figure}[b]
	\centering
	\includegraphics[width=\linewidth]{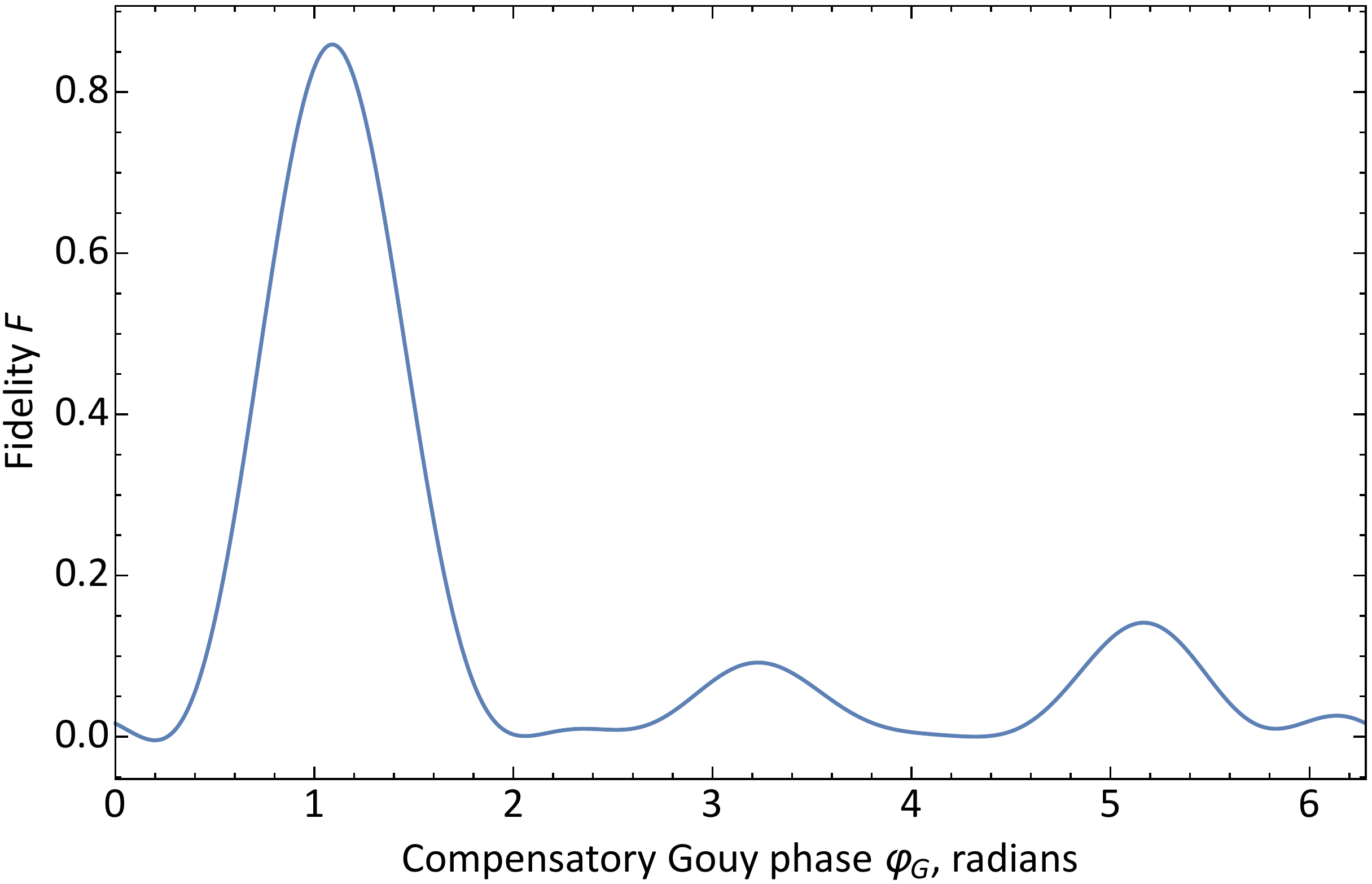}
	\caption{A typical dependence of preparation fidelity~$F$ on compensatory Gouy phase for $D = 32$. The global maximum corresponds to the true phase value.}
	\label{fig:GouyPhase}
\end{figure}

The prepared state~$\ket{\psi_\text{prep.}}$ and~the detected one~$\ket{\psi_\text{det.}}$ are tied by a unitary transformation~$U(\phi_G)$ with one unknown parameter, namely, Gouy phase~$\phi_G$: $\ket{\psi_\text{det.}} = U(\phi_G) \ket{\psi_\text{prep.}}$, where~$U(\phi_G)$ is a diagonal matrix and contains entities of unit magnitude only. Compensated fidelity of preparation~$F$ or \emph{preparation fidelity} for short is determined by maximizing fidelity to~$\ket{\psi_\text{det.}}$ over~$\phi_G$:
\begin{equation}
F = \max_{\phi_G} \me{\psi_\text{prep.}}{U^\dagger(\phi_G) \hat \rho U(\phi_G)}{\psi_\text{prep.}}. \label{eq:PreparationFidelity}
\end{equation}
The state $U(\phi_G^\text{max}) \ket{\psi^\text{prep.}}$, where $\phi_G^\text{max}$ maximizes~\eqref{eq:PreparationFidelity}, is the \emph{compensated prepared state}.

A typical dependence of the quantity under maximization in~\eqref{eq:PreparationFidelity} on~$\phi_G$ is demonstrated in Fig.~\ref{fig:GouyPhase} for the system dimensionality $D = 32$. There is a ``low-valued ripple'' with local extrema (especially pronounced for higher~$D$), but for all tested states and dimensions, a global maximum around $\phi_G \approx 1$ radians with near-unity value is observed. 

\bibliographystyle{apsrev4-2}
\bibliography{ref_base}

\end{document}